\newcommand{\half}{\mbox{\footnotesize $\frac{1}{2}$}}
\newcommand{\hi}[1]{\emph{\textbf{#1}}}
\newcommand{\qm}{quantum mechanics}
\newcommand{\er}{\eqref}
\renewcommand{\L}{\label} 
\newcommand{\beq}{\begin{equation}}
\newcommand{\eeq}{\end{equation}} 
\newcommand{\bea}{\begin{eqnarray}}
\newcommand{\eea}{\end{eqnarray}}
\newcommand{\wed}{\wedge}
 \newcommand{\til}{\tilde}
\newcommand{\raw}{\rightarrow}
\newcommand{\lraw}{\leftrightarrow}
\newcommand{\x}{\times}
\newcommand{\ca}{C*-algebra}
\newcommand{\Hs}{Hilbert space}
   \newcommand{\vna}{von
Neumann algebra}
\newcommand{\CA}{\mathcal{C}(A)} 
\newcommand{\al}{\alpha} 
\newcommand{\gm}{\gamma}
\newcommand{\lm}{\lambda} 
\newcommand{\rh}{\rho} 
 \newcommand{\ta}{\tau} 
 \newcommand{\phv}{\varphi}
\newcommand{\ch}{\ch}  
\newcommand{\om}{\omega} 
\newcommand{\ups}{\upsilon}
\newcommand{\daw}{\!\downarrow}
\newcommand{\inv}{^{-1}}
\newcommand{\Tr}{\mbox{\rm Tr}\,}
\newcommand{\CB}{{\mathcal B}}
\newcommand{\CC}{{\mathcal C}} 
\newcommand{\DD}{{\mathcal D}} 
\newcommand{\CE}{{\mathcal E}}
\newcommand{\CO}{{\mathcal O}} \newcommand{\CP}{{\mathcal P}}
\renewcommand{\L}{\label}
\newcommand{\C}{{\mathbb C}} 
 \newcommand{\R}{{\mathbb R}}
\newcommand{\T}{{\mathbb T}} 
 \newcommand{\enp}{\hfill\mbox{}\qed}
\def\moverlay{\mathpalette\mov@rlay}
\def\mov@rlay#1#2{\leavevmode\vtop{%
   \baselineskip\z@skip \lineskiplimit-\maxdimen
   \ialign{\hfil$\m@th#1##$\hfil\cr#2\crcr}}}
\newcommand{\charfusion}[3][\mathord]{
    #1{\ifx#1\mathop\vphantom{#2}\fi
        \mathpalette\mov@rlay{#2\cr#3}
      }
    \ifx#1\mathop\expandafter\displaylimits\fi}
\newcommand{\AF}{\ensuremath{\mathrm{AF}}}
\newtheorem{theorem}{Theorem}[section]
\newtheorem{definition}[theorem]{Definition}
\newtheorem{lemma}[theorem]{Lemma}
\newtheorem{proposition}[theorem]{Proposition}
\begin{document}
\title{Symmetries  in exact Bohrification}
\author{\emph{Dedicated to the memory of Paul Busch (1955-2018)}\\ \ \\
	Klaas Landsman\footnote{N.P. Landsman, Institute for Mathematics, Astrophysics, and Particle Physics, Radboud University,
		Heyendaalseweg 135, 6525 AJ Nijmegen, The Netherlands,
		 email: landsman@math.ru.nl}\ \  and Bert Lindenhovius\footnote{A.J. Lindenhovius, Department of Computer Science, Tulane University, 6823 St Charles Ave, 70118 New Orleans, United States, email: alindenh@tulane.edu}   }
\date{}

\maketitle
\abstract{The `Bohrification'' program in the foundations of \qm\ implements Bohr's doctrine of classical concepts through an interplay between commutative and non-commutative operator algebras.
Following a brief  conceptual and mathematical review of this program, we focus on one half of it, called ``exact'' Bohrification, where 
a (typically noncommutative) unital $C^*$-algebra $A$ is studied through its commutative unital $C^*$-subalgebras $C\subset A$, organized into
 a poset $\CA$. 
 This poset turns out to be a rich invariant of $A$. To set the stage, we first give a general review of symmetries in elementary \qm\ (i.e., on \Hs) as well as in algebraic quantum theory, incorporating $\CA$ as a new kid in town. We then give a detailed proof of a deep result due to Hamhalter (2011), according to which $\CA$ determines $A$ as  a Jordan algebra (at least for a large class of $C^*$-algebras). As a corollary, we prove a new Wigner-type theorem to the effect that order isomorphisms of $\mathcal{C}(B(H))$ are 
 (anti) unitarily implemented. We also show how $\CA$ is related to the orthomodular poset $\CP(A)$ of projections in $A$. These results indicate that $\CA$ is  a serious player in $C^*$-algebras and quantum theory. }
\section{Bohrification}
\label{sec:1}
The Bohrification program is an attempt to relate the core of the Copenhagen Interpretation of \qm, viz.\ Bohr's
 \emph{doctrine of classical concepts}, to the mathematical formalism of operator algebras created by  von Neumann, as subsequently generalized into the theory of $C^*$-algebras by Gelfand \& Naimark (1943). Other elements of the  Copenhagen Interpretation, such as the rejection of the possibility to analyze what is going on during measurements,  the closely related idea of the collapse of the wave-function (in the sense of a ``second'' time-evolution in \qm\ beside the primary unitary evolution governed by the Schr\"{o}dinger equation), and the ensuing hybrid interpretation of quantum-mechanical states  as mere catalogues of the probabilities attached to possible outcomes of experiments, are irrelevant for this paper (and in fact appear outdated to us). To introduce the doctrine, we quote the opening of the
  most (perhaps the only) systematic presentation of the Copenhagen Interpretation by  one of its original authors (viz.\ Heisenberg):
\begin{quotation}
`The Copenhagen interpretation of quantum theory starts from a paradox. Any experiment in physics, whether it refers to the phenomena of daily life or to atomic events, is to be described in the terms of classical physics. The concepts of classical physics form the language by which we describe the arrangement of our experiments and state the results. We cannot and should not replace these concepts by any others. Still the application of these concepts is limited by the relations of uncertainty. We must keep in mind this limited range of applicability of the classical concepts while using them, but we cannot and should not try to improve them.'  (Heisenberg  1958, p.\ 44) 
\end{quotation}
  Despite their  agreement about the central role of classical concepts in the study of \qm, there seems to have been an
unresolved disagreement between Bohr and Heisenberg about their precise status (Camilleri, 2009), as follows:
\begin{itemize}
\item According to Bohr---haunted by his idea of Complementarity---only one classical concept (or sometimes one coherent family of classical concepts) applies to  the experimental study of some quantum object at a time. But if it applies, it does so exactly, and has the same meaning as in classical physics (since Bohr held that any other meaning would simply be undefined). In a different experimental setup, some other classical concept may apply, which in classical physics would have been compatible with the previous one, but in \qm\ is not. Early  examples of such ``complementary'' pairs, as presented e.g.\ in Bohr (1928),  are particle versus wave  and
space-time  versus ``causal" descriptions (by which Bohr means conservation laws). Later on, Bohr emphasized the complementarity of
 one "phenomenon" (i.e., an indivisible unit  of a quantum object coupled to an experimental arrangement) against another (cf.\ Held, 1994).
 \item Heisenberg, on the other hand, seems to have held a more relaxed attitude towards classical concepts, arguably inspired by his game-changing paper on the quantum-mechanical reinterpretation (\emph{Umdeutung}) of mechanical and kinematical relations (Heisenberg, 1925).
  In this paper, he performed the act of what we now call quantization, in putting the observables of classical physics (i.e.\ functions on a phase space) on a new mathematical footing (i.e., they were turned into matrices), where they also have new properties. In his second epoch-making paper (Heisenberg, 1927) introducing the uncertainty relations, he then tried to find some operational meaning of these ``reinterpreted'' observables through measurement procedures. Since quantization applies to all classical observables at once,  all classical concepts apply simultaneously, but approximately (ironically, Heisenberg (1925)  was inspired by Bohr's  Correspondence Principle, but later on Bohr insisted on precise nature of classical concepts described above). 
\end{itemize}

This ideological split between Bohr and Heisenberg is still with us, as it leads to a similar break-up of the Bohrification program into two parts.
The overall idea of Bohrification is to interpret classical concepts as commutative $C^*$-algebras, and hence the two parts in question  are mathematically distinguished by the specific relationship between a given 
 noncommutative $C^*$-algebra $A$ and the commutative $C^*$-algebras that give physical ``access'' to $A$. 
 Bohr's view on the precise nature of classical concepts comes back mathematically in  \emph{exact} Bohrification, which studies (unital) commutative $C^*$-subalgebras $C$ of a given (unital) noncommutative $C^*$-algebra $A$. 
 Heisenberg's interpretation of the doctrine of classical concepts, on the other hand, resurfaces in  \emph{asymptotic} Bohrification, which involves asymptotic inclusions (i.e.\ deformations) of commutative $C^*$-algebras into noncommutative ones. 
 The precise relationship between Bohr's and Heisenberg's views, and hence also between exact and asymptotic Bohrification, remains to be clarified; their joint existence is unproblematic, however, since the two programs complement each other. 
 
 As reviewed in Landsman (2016) and explained in detail in Landsman (2017), asymptotic Bohrification provides  a mathematical setting for
the measurement problem, spontaneous symmetry breaking, the classical limit of \qm, the thermodynamic limit of quantum statistical mechanics, and 
the Born rule for probabilities construed as long-run frequencies, whereas exact Bohrification  turns out to be an appropriate framework for
  Gleason's Theorem, the Kadison--Singer conjecture, the Born rule (for single case probabilities), and, initially via the  topos-theoretic approach to \qm, 
intuitionistic quantum logic. In the context of the present paper it should be mentioned that the poset $\CA$ we will be concerned with has its origins in 
 the reinterpretation of the  Kochen--Specker Theorem in the language of topos theory by Isham \& Butterfield (1998). In the setting of \vna s this led
  Hamilton, Isham, \& Butterfield (2000) to a poset similar to $\CA$ (though crucially with the opposite ordering), which was studied in great detail by
   D\"{o}ring \& Isham {(2008a-d)}.  The poset $\CA$ as we use it was introduced by Heunen, Landsman \& Spitters (2009), again in the context of topos theory.
   
In this paper we discuss the virtues of exact Bohrification in providing a new invariant $\CA$ for unital $C^*$-algebras $A$, defined as the poset of  all unital  commutative $C^*$-subalgebras of a unital $C^*$-algebra $A$ (that share the unit of $A$), ordered by inclusion. We start with a general discussion of symmetries in elementary \qm\ on \Hs\ in \S\ref{secBH}, which culminates in our Wigner Theorem for $\mathcal{C}(B(H))$. Moving to general (unital) $C^*$-algebras $A$ in \S\ref{secA}, we discuss the place of $\CA$ amidst some comparable constructions $A$ gives rise to, viz.\ its (pure) state space, its Jordan algebra structure, its effect algebra, and its (orthocomplemented) poset of projections.
In \S\ref{secHam} we give a complete and independent proof of Hamhalter's (2011) great theorem to the effect that for a large class of $C^*$-algebras $A$, order isomorphisms of 
$\CA$ are induced by Jordan automorphisms of $A$ (this theorem was predated by an analogous result by  D\"{o}ring \& Harding (2010) for \vna s, which may have been the first of its kind).
Hamhalter's  theorem is also the key lemma in our Wigner Theorem for $\mathcal{C}(B(H))$.  We close our paper  in \S\ref{secPA} with  a study of the relationship between $\CA$ and the poset $\CP(A)$ of projections in $A$.
\section{Symmetries in quantum theory on \Hs}\L{secBH}
 Even in elementary \qm, where $A=B(H)$, i.e., the $C^*$-algebra of all bounded operators on some \Hs\ $H$, the concept of a symmetry is already diverse, as least apparently, since a non-commutative $C^*$-algebra like $B(H)$ gives rise to numerous ``quantum structures''. The main examples are:
  \begin{enumerate}
\item The \hi{normal pure state space} $\CP_1(H)$, i.e., the set of one-dimensional projections on $H$, 
 with a  ``transition probability'' $\ta:\CP_1(H)\x \CP_1(H)\raw[0,1]$ defined by
\begin{equation}
\ta(e,f)=\Tr(ef). \L{taborn}
\end{equation}
\item The \hi{normal  state space} $\mathcal{D}(H)$, which is the convex set of all density operators $\rh$ on $H$
(i.e., $\rh\geq 0$ and $\Tr(\rh)=1$).
\item The \hi{self-adjoint operators} $B(H)_{\mathrm{sa}}$ on $H$, seen as a Jordan algebra.
\item The \hi{effects} $\CE(H)=[0,1]_{B(H)}$ on $H$, i.e., the set of all $a\in B(H)_{\mathrm{sa}}$ for which $0\leq a\leq 1_H$,
seen as a convex poset.
\item The \hi{projections} $\CP(H)$  on $H$, seen as an orthocomplemented lattice.
\item The \hi{unital commutative $C^*$-subalgebras $\CC(B(H))$ of $B(H)$}, seen as a poset.
\end{enumerate}
 Each structure comes with its own notion of a symmetry (whose name has been chosen for historical reasons and---except  for the first and third---is not  standard):
 \begin{definition}\L{defQMsym}
Let $H$ be a \Hs\ (not necessarily finite-dimensional). 
\begin{enumerate}
\item  A\hi{Wigner symmetry} is
a bijection  $\mathsf{W}:\CP_1(H)\raw \CP_1(H)$ that satisfies
\begin{equation}
\Tr(\mathsf{W}(e)\mathsf{W}(f))=\Tr(ef), \:\: e,f\in \CP_1(H).\L{UtrUtr}
\end{equation}
\item 
 A \hi{Kadison symmetry} is an affine bijection  $\mathsf{K}: \mathcal{D}(H)\raw \mathcal{D}(H)$.
 \item A  \hi{Jordan symmetry} is an invertible \hi{Jordan map} $\mathsf{J}: B(H)_{\mathrm{sa}}\raw B(H)_{\mathrm{sa}}$, where the latter is an
$\R$-linear map
 that satisfies either one  of the equivalent conditions
 \begin{eqnarray}
\mathsf{J}(a\circ b)&=&\mathsf{J}(a)\circ\mathsf{J}(b);\L{JPP}\\
 \mathsf{J}(a^2)&=&\mathsf{J}(a)^2,\L{Jsquares} 
\end{eqnarray}
where the \hi{Jordan product} $\circ$ is defined by $a\circ b=\half(ab+ba)$, so that $a^2=a\circ a$.
Equivalently, a Jordan symmetry is a Jordan automorphism of $B(H)$, see below. 
  \item A \hi{Ludwig symmetry} is an affine order  isomorphism $\mathsf{L}: \CE(H)\raw\CE(H)$.
 \item A \hi{von Neumann symmetry} is an order isomorphism  $\mathsf{N}: \CP(H)\raw\CP(H)$ that  preserves the orthocomplementation, i.e.\
 $\mathsf{N}(1_H-e)=1_H-\mathsf{N}(e)$, $e\in\CP(H)$.
\item A \hi{Bohr symmetry} is an order isomorphism $\mathsf{B}: \CC(B(H))\raw \CC(B(H))$.
\end{enumerate}\end{definition} 
In no.\ 2 (and 4) being \hi{affine} means that  $\mathsf{K}$ (and similarly $\mathsf{L}$) preserves convex sums, i.e., for $t\in(0,1)$ and $\rh_1,\rh_2\in \mathcal{D}(H)$,
$\mathsf{K}(t\rh_1+(1-t)\rh_2)=t\mathsf{K}\rh_1+(1-t) \mathsf{K}\rh_2$. 
In nos.\  4--6, an  \hi{order isomorphism} $\mathsf{O}$ of the given poset is a bijection such that $x\leq y$ if and only if $\mathsf{O}(x)\leq \mathsf{O}(y)$. 
In no.\ 3
one may complexify  $\mathsf{J}$ to a $\C$-linear map 
\beq
\mathsf{J}_{\C}:B(H)\raw B(H) \L{JC}
\eeq by writing $a\in B(H)$ as 
$a=b+ic$, with $b=\half (a+a^*)$ and $c=-\half i(a-a^*)$, so that $b^*=b$ and $c^*=c$, and putting
\beq
\mathsf{J}_{\C}(a)=\mathsf{J}(b)+i\mathsf{J}(c).
\eeq If $\mathsf{J}$ satisfies 
 \er{JPP} - \er{Jsquares} for each $a,b\in B(H)_{\mathrm{sa}}$, then $\mathsf{J}_{\C}$ satisfies \er{JPP} - \er{Jsquares} for each $a,b\in B(H)$ (with $\mathsf{J}\leadsto \mathsf{J}_{\C}$) as well as 
 \beq
 \mathsf{J}_{\C}(a^*)=\mathsf{J}_{\C}(a)^*. \L{JCstar}
 \eeq 
 Conversely, one may restrict such a  $\mathsf{J}_{\C}$ to the self-adjoint part 
$B(H)_{\mathrm{sa}}$ of $B(H)$, so that Jordan symmetries are essentially the same thing as Jordan automorphisms, i.e.,
$\C$-linear maps \er{JC} that satisfy \er{JCstar} and  \er{JPP} - \er{Jsquares} with $\mathsf{J}\leadsto \mathsf{J}_{\C}$.

It is well known that the first four notions of symmetry are equivalent (see for example Alfsen \& Shultz, 2001;  Bratteli \& Robinson, 1987; Cassinelli et al, 2004; Moretti, 2013).
 If $\dim(H>2)$, as a corollary to Gleason's Theorem the fifth notion is also equivalent to all of these  (Hamhalter, 2004), and, under the same assumption, so is the sixth (Hamhalter, 2011). We now sketch these equivalences; combine the above references or see Landsman (2017) for complete proofs. 
 \begin{enumerate}
\item There is a bijective correspondence between:
\begin{itemize}
\item  Wigner symmetries $\mathsf{W}:\CP_1(H)\raw \CP_1(H)$;
\item  Kadison symmetries $\mathsf{K}: \mathcal{D}(H)\raw \mathcal{D}(H)$, viz.
\end{itemize}
\begin{eqnarray}
\mathsf{W}&=&\mathsf{K}_{| \CP_1(H)};\L{UsfU1}\\
\mathsf{K}\left( \sum_i \lm_i e_{\ups_i}\right) &=& \sum_i \lm_i \mathsf{W}( \ups_{\ups_i}),\L{UsfU2}
\end{eqnarray}
where $\rh= \sum_i \lm_i e_{\ups_i}$ is a  spectral expansion of $\rh\in \mathcal{D}(H)$ in terms of a basis of eigenvector $\ups_i$ of $\rh$ with eigenvalues $\lm_i$, where
$\lm_i\geq 0$ and $\sum_i \lm_i=1$.

 It is a nontrivial fact that \er{UsfU2} is well defined (despite  non-uniqueness of the spectral expansion in case that $\rh$ has degenerate spectrum). Furthermore, 
 $\mathsf{K}$ maps $\CP_1(H)\subset  \mathcal{D}(H)$ into itself because $\CP_1(H)=\partial_e  \mathcal{D}(H)$ and affine bijections of convex sets restrict to bijections of their extreme boundaries.  Finally,
  \er{UsfU1} preserves transition probabilities because an affine bijection $\mathsf{K}: \mathcal{D}(H)\raw \mathcal{D}(H)$ extends to an isometric isomorphism  $\mathsf{K}_1: B_1(H)_{\mathrm{sa}}\raw B_1(H)_{\mathrm{sa}}$ with respect to the trace-norm $\|\cdot\|_1$, and for any $e,f\in\CP_1(H)$ we have 
  \begin{equation}
\|e-f\|_1=2\sqrt{1-\Tr(ef)}.\L{epsephvsqrt}
\end{equation}
   \item There is a bijective correspondence between:
\begin{itemize}
\item   Kadison symmetries $\mathsf{K}: \mathcal{D}(H)\raw \mathcal{D}(H)$;
\item  Jordan symmetries $\mathsf{J}: B(H)_{\mathrm{sa}}\raw B(H)_{\mathrm{sa}}$,
\end{itemize}
such that for any $a\in B(H)_{\mathrm{sa}}$ one has
\begin{equation}
\Tr (\mathsf{K}(\rh)a)=\Tr(\rh\mathsf{J}(a)). \L{Urhoa}
\end{equation}
To see this, we identify $\mathcal{D}(H)$ with the set $S_n(B(H))$ of normal states $\om$ on $B(H)$ through $\om(a)=\Tr(\rh a)$, so that with slight abuse of notation eq.\ 
\er{Urhoa}  reads 
\beq
(\mathsf{K}\om)(a)=\om(\mathsf{J}(a)). \L{simply}
\eeq This defines $\mathsf{K}$ in terms of $\mathsf{J}$. 
Conversely, we identify $B(H)_{\mathrm{sa}}$
with the set $A_b(S_n(B(H)))$ of all  real-valued bounded affine functions on the convex set $S_n(B(H))$ through the Gelfand-like transform
$a\lraw \hat{a}$, where $\hat{a}(\om)=\om(a)$; here the nontrivial analytic facts are  that the functions $\hat{a}$ exhaust $A_b(S_n(B(H)))$ and that 
$\|a\|=\|\hat{a}\|_{\infty}$. We now define  a map $$\mathsf{\hat{J}}: A_b(S_n(B(H)))\raw A_b(S_n(B(H)))$$ in terms of $\mathsf{K}$ in the obvious way, i.e., by 
$(\mathsf{\hat{J}}\hat{a})(\om)=\hat{a}(\mathsf{K}(\om))$. This, in turn, defines $\mathsf{J}$  in terms of $\mathsf{K}$, which again yields \er{simply}. 
The map  $\mathsf{\hat{J}}$ trivially preserves the (pointwise) order as well as the unit (function) in  $A_b(S_n(B(H)))$, so that the corresponding map $\mathsf{J}$
preserves the usual partial order on $\leq$ $B(H)_{\mathrm{sa}}$ (i.e.\ $a\leq b$ iff $b-a=c^2$ for some $c\in B(H)_{\mathrm{sa}}$)
as well as the unit (operator) $1_H$ in $B(H)_{\mathrm{sa}}$. Finally, for invertible linear maps these properties  are equivalent to the fact that
$\mathsf{J}$ is a Jordan symmetry. 
\item There is a bijective correspondence between:
\begin{itemize}
\item  Jordan symmetries $\mathsf{J}: B(H)_{\mathrm{sa}}\raw B(H)_{\mathrm{sa}}$;
\item   Ludwig symmetries $\mathsf{L}: \CE(H)\raw\CE(H)$.
\end{itemize}
Since $\CE(H)\subset B(H)_{\mathrm{sa}}$, we may simply restrict $\mathsf{J}$ to $\CE(H)$ so as to obtain $\mathsf{L}$.   Since a Jordan automorphism preserves order as well as the unit, the inequality $0\leq a\leq 1_H$ characterizing
 $a\in\CE(H)$ is preserved, i.e., $0\leq\mathsf{J}(a)\leq 1_H$. In other words, $\mathsf{J}$
 preserves $\CE(H)$, whose order it preserves, too. Convexity is obvious, since $\mathsf{L}=\mathsf{J}_{|\CE(H)}$ comes from a linear map. 
 Conversely, since $\mathsf{L}$ is an order isomorphism, it must satisfy $\mathsf{L}(0)=0$ (as well as $\mathsf{L}(1_H)=1_H$), since $0$ is  the bottom element of $\CE(H)$ as an ordered set (and 
 $1_H$ is its the top element). One can show that this property plus convexity yields a linear extension $\mathsf{J}$ of $\mathsf{L}$ from $\CE(H)$ to $B(H)_{\mathrm{sa}}$, which is unital as well order-preserving, and hence is a Jordan symmetry.
\item  If $\dim(H)>2$, then there is a bijective correspondence between:
\begin{itemize}
\item  Jordan symmetries $\mathsf{J}: B(H)_{\mathrm{sa}}\raw B(H)_{\mathrm{sa}}$;
\item von Neumann symmetries   $\mathsf{N}: \CP(H)\raw\CP(H)$.
\end{itemize}
Jordan symmetries restrict to order isomorphisms of $\CP(H)\subset B(H)_{\mathrm{sa}}$; the only nontrivial point is that the order in $\CP(H)$ (i.e., $e\leq f$ iff $ef=e$, which is the case iff $eH\subseteq fH$) coincides with the order inherited from $B(H)_{\mathrm{sa}}$. 
Conversely, one may attempt to extend some map $\mathsf{N}: \CP(H)\raw\CP(H)$ to $B(H)_{\mathrm{sa}}$ by first supposing that $a\in B(H)_{\mathrm{sa}}$
has a finite spectral decomposition 
 $a=\sum_j \lm_j f_j$, where  $(f_j)$  is a family of mutually orthogonal projections and  $\lm_j\in\R$, and putting
\begin{equation}
\mathsf{J}(a)=\sum_j \lm_j \mathsf{N}(f_j). \L{alfromN}
\end{equation}
For general $a$, one then hopes to be able to use the spectral theorem in order to extend $\mathsf{J}$ to all of $B(H)_{\mathrm{sa}}$ by continuity.
  It is far from trivial that this construction works and yields an $\R$-linear map, but it does. The proof relies on Gleason's Theorem (whence the assumption $\dim(H)>2$), which in turn can be invoked because  von Neumann symmetries preserve all suprema in $\CP(H)$. 
The extension $\mathsf{J}$  thus obtained is positive and unital, and hence is a Jordan symmetry.
\item  If $\dim(H)>2$, then there is a bijective correspondence between:
\begin{itemize}
\item  Jordan symmetries $\mathsf{J}: B(H)_{\mathrm{sa}}\raw B(H)_{\mathrm{sa}}$;
\item Bohr symmetries  $\mathsf{B}: \CC(B(H))\raw \CC(B(H))$.
\end{itemize}
Given $\mathsf{J}$, as explained above we first complexify it so as to obtain a Jordan automorphism $\mathsf{J}_{\C}:B(H)\raw B(H)$. It is a standard result that such maps are isometric. 
If $C\subset B(H)$ is commutative, then so is its image $\mathsf{J}_{\C}(C)$, since commutativity of $C$ is equivalent to associativity of the Jordan product within $C$, and hence is preserved under Jordan maps. 
Furthermore, since $\mathsf{J}_{\C}$ is an isometry on $C$, its image is (norm) closed, and  by \er{JCstar} it is also self-adjoint. Finally, Jordan automorphisms preserve the unit $1_H$, so that if $C$ is a unital commutative $C^*$-subalgebra of $B(H)$, then  so is $\mathsf{J}_{\C}(C)$.
Thus  $\mathsf{J}$ induces a map $\mathsf{B}$ by $\mathsf{B}(C)=\mathsf{J}_{\C}(C)$. 

  Trivially, if $C\subseteq D$ in $B(H)$, so that $C\leq D$ in $\mathcal{C}(B(H))$, then $\mathsf{J}_{\C}(C)\subseteq\mathsf{J}_{\C}(D)$ in $B(H)$, so that $\mathsf{J}(C)\leq\mathsf{J}(D)$ in $\mathcal{C}(B)$. It follows that $\mathsf{B}$ is an order isomorphism. 
  
  The converse, i.e., the fact that any Bohr symmetry  is induced by a Jordan symmetry in the said way, is very deep (Hamhalter, 2011); see Theorem \ref{Hamhalter1} below. 
  \end{enumerate}
In view of these equivalences and Wigner's Theorem, we may conclude:
\begin{theorem}\L{BigTheorem}
Let $H$ be a \Hs, with $\dim(H)>2$ in nos.\ 5 and 6.
\begin{enumerate}
\item Each Wigner symmetry takes the form $\mathsf{W}(e)= ueu^*$ ($e\in\CP_1(H)$);
\item Each Kadison symmetry takes the form $\mathsf{K}(\rh)= u\rh u^*$ ($\rh\in\mathcal{D}(H)$); 
\item Each Jordan symmetry  takes the form $\mathsf{J}(a)=uau^*$; ($a\in B(H)_{\mathrm{sa}}$);
\item Each Ludwig symmetry  takes the form $\mathsf{L}(a)= uau^*$ ($a\in\CE(H)$); 
\item Each von Neumann symmetry  takes the form $\mathsf{N}(e)= ueu^*$ ($e\in\CP(H)$);
\item Each Bohr  symmetry  takes the form $\mathsf{B}(C)= uCu^*$ ($C\in  \CC(B(H))$),
\end{enumerate}
where in all cases the operator $u$ is either unitary or anti-unitary, and is uniquely determined by the symmetry in question ``up to a phase'' (that is, $u$ and  $u'$ implement the same symmetry by conjugation iff  $u'=zu$, where $z\in\T$). 
\end{theorem}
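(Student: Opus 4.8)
The plan is to prove one of the six assertions by hand and then transport it along the equivalences established in \S\ref{secBH}. The natural starting point is no.\ 1, which is Wigner's Theorem: a bijection $\mathsf{W}:\CP_1(H)\raw\CP_1(H)$ preserving the transition probabilities \er{taborn} satisfies $\mathsf{W}(e)=ueu^*$ for some $u$ that is unitary or anti-unitary. I would prove this classically (writing $\mathrm{Ad}(v)$ for the map $x\mapsto vxv^*$). Orthogonality of rank-one projections is detected by $\ta$ (one has $e\perp f$ iff $\ta(e,f)=0$), so $\mathsf{W}$ preserves orthogonality and maximal orthogonal families, hence carries an orthonormal basis $(\ups_i)$ to an orthonormal basis $(w_i)$; choosing a reference index $0$ and exploiting $\ta$-preservation on the vectors $\ups_i+\ups_0$ pins down the relative phases of the $w_i$, after which $\ups_i\mapsto w_i$ extends to a bounded operator $u$ that --- once its action on a single $\ups_i+i\ups_0$ is recorded --- is forced to be linear throughout or conjugate-linear throughout, and is isometric, hence (anti)unitary, with $\mathsf{W}=\mathrm{Ad}(u)$ on $\CP_1(H)$. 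One may instead simply invoke this well-known theorem.

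Next I would observe that every correspondence in \S\ref{secBH} carries a symmetry of the form $x\mapsto vxv^*$ with $v$ (anti)unitary to another symmetry of the same form (with $v$ replaced at most by its adjoint). Concretely, starting from $\mathsf{W}=\mathrm{Ad}(u)$: formula \er{UsfU2} gives $\mathsf{K}(\rh)=u\rh u^*$; the pairing \er{Urhoa} then gives $\mathsf{J}=\mathrm{Ad}(u^*)$ on $B(H)_{\mathrm{sa}}$; restriction to $\CE(H)$ gives $\mathsf{L}$ of the same form; when $\dim(H)>2$, \er{alfromN} reconstructs the von Neumann symmetry as $\mathrm{Ad}(u^*)$ on $\CP(H)$; and $\mathsf{B}(C)=\mathsf{J}_{\C}(C)=u^*Cu$ reconstructs the Bohr symmetry. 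Since, conversely, each of the six symmetries restricts or extends back to a Wigner symmetry, the previous paragraph shows that all of them are conjugations by an (anti)unitary operator. The only non-formal input here is that a Bohr symmetry really does arise from a Jordan symmetry; that is exactly Hamhalter's Theorem \ref{Hamhalter1}, proved independently in \S\ref{secHam}. This step, together with Wigner's Theorem, is where all the difficulty lies --- everything else in the chain is bookkeeping --- and I expect Hamhalter's Theorem to be the main obstacle.

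Finally, for the uniqueness clause: suppose $u,u'$ are each unitary or anti-unitary and $u(\cdot)u^*=u'(\cdot)u'^*$ on $B(H)$. If one were unitary and the other anti-unitary, then $v:=u'^{-1}u$ would be a bijective conjugate-linear operator with $v\ps\in\C\ps$ for all $\ps\in H$; a quick additivity argument then forces $v=0$ (for $\dim H\geq 2$), which is absurd. Hence $u$ and $u'$ are of the same type, and $v:=u^{-1}u'$ is a $\C$-linear bijection with $v\ps\in\C\ps$ for all $\ps$, so $v=z\cdot 1_H$, and since $v$ is isometric, $z\in\T$; thus $u'=zu$, the converse being immediate. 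Transporting this normalization through the correspondences above yields the ``up to a phase'' statement for each of the six symmetry types.
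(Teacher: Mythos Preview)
Your proposal is correct and follows essentially the same approach as the paper: the paper's own ``proof'' is the single sentence \emph{``In view of these equivalences and Wigner's Theorem, we may conclude''}, together with the remark afterwards that only nos.\ 1 and 3 admit direct proofs in the literature (Simon; Bratteli--Robinson), and that no.\ 6 rests on Hamhalter's Theorem~\ref{Hamhalter1}. You have simply made this explicit, and in addition you have supplied an argument for the ``up to a phase'' uniqueness clause, which the paper states but does not prove.
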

Of these six results, only the first and the third seem to have a direct proof; see e.g.\ Simon (1976) and Bratteli \&  Robinson (1987), respectively. Neither of these proofs is particularly elegant, so that especially a direct proof of no.\ 6 would be welcome. 
\section{Symmetries in algebraic quantum theory}\L{secA}
In this section we generalize the above analysis from $A=B(H)$ to arbitrary $C^*$-algebras $A$, which for simplicity we assume to have a unit $1_A$. 
\begin{enumerate}
\item  The \hi{pure state space} $P(A)=\partial_e S(A)$ of $A$ is the extreme boundary of the state space $S(A)$, 
seen as a uniform space equipped with a transition probability
\begin{equation}
\ta(\om,\om')= \inf\{\om(a)\mid a\in A, 0\leq a\leq 1_A,\om'(a)=1\}.\L{tpCstarA0}
\end{equation}
 If $A=B(H)$ and $\om,\om'$ lie in the normal pure state space $P_n(B(H))$ of $B(H)$, a simple computation (Landsman, 1998) shows that the above expression reproduces the standard quantum-mechanical transition probabilities \er{taborn}, but  compared to this special case one novel aspect of $P(A)$ is that all pure states are now taken into account (as opposed to merely the \emph{normal} ones, which notion is undefined for general $C^*$-algebras anyway). Another is that in order to obtain the desired equivalence with other structures, the set $P(A)$ should carry a uniform structure, namely the $w^*$-uniformity inherited from $A^*$. Thus 
a \hi{Wigner symmetry} of $A$ is a uniformly continuous bijection  $\mathsf{W}: P(A)\raw P(A)$ 
with  uniformly continuous inverse that preserves transition probabilities, i.e., that satisfies
\begin{equation}
\ta(\mathsf{W}(\om)\mathsf{W}(\om'))=\ta(\om,\om'), \:\: \om,\om'\in P(A).\L{WPA}
\end{equation}
\item The  \hi{state space} $S(A)$ is the set of all states on $A$, seen as a compact convex set in the $w^*$-topology inherited from the embedding $S(A)\subset A^*$. Hence 
a \hi{Kadison symmetry} of $A$ is an affine homeomorphism  $\mathsf{K}: S(A)\raw S(A)$.
Compared to the case $A=B(H)$, firstly \emph{all} states are now taken into account (instead of all \emph{normal} states), and secondly we have added a continuity condition on $\mathsf{K}$.
\item Any $C^*$-algebra $A$ defines an associated \hi{Jordan algebra}--more precisely, a $JB$-algebra if the norm is taken into account, cf.\ Hanche-Olsen \& St\o rmer (1984)--namely $A_{\mathrm{sa}}$ equipped with the commutative product $a\circ b=\half(ab+ba)$. A
 \hi{Jordan symmetry} $\mathsf{J}$  of $A$ is a Jordan isomorphism of $(A_{\mathrm{sa}},\circ)$
 (equivalently, a unital linear order isomorphism of 
 $(A_{\mathrm{sa}},\leq)$,  cf.\ 
 Alfsen \& Shultz (2001), Prop.\ 4.19). 
  \item The \hi{effects} in $A$ comprise the order unit interval $\CE(A)=[0,1_A]$, i.e., the set of all $a\in A_{\mathrm{sa}}$
  such that $0\leq a\leq 1_A$, seen as a convex poset as for $B(H)$. Hence a
  \hi{Ludwig symmetry} of $A$ is an affine order  isomorphism $\mathsf{L}: \CE(A)\raw\CE(A)$.
 \item The projections  $\CP(A)$ in $A$ form an orthocomplemented poset  with $e\leq f$ iff $ef=e$ and $e^{\perp}=1_A-e$;
  if $A$ is a \vna\ or more generally an $AW^*$-algebra or a Rickart \ca,   $\CP(A)$ is even an orthocomplemented lattice.
A \hi{von Neumann symmetry} of $A$ is an  invertible map  $\mathsf{N}: \CP(A)\raw\CP(A)$
 that preserves 0 and $\perp$ (and hence preserves 1) and satisfies $\phv(x\vee y)=\phv(x)\vee\phv(y)$ if 
$x\leq y^{\perp}$ (in which case $x\vee y$ is defined, as is always the case if $\CP(A)$ is a lattice). 
\item The poset $\CC(A)$ lying at the heart of exact Bohrification consists of all \emph{commutative} $C^*$-subalgebras of $A$ that contain the unit $1_A$, 
 partially ordered  by  inclusion. 
 A \hi{Bohr symmetry} of $A$, then,  is an order isomorphism $\mathsf{B}: \CC(A)\raw \CC(A)$.
\end{enumerate}

The structures 1, 2, 3, and 4 are equivalent, as follows.
\begin{theorem}\L{Shultz}
 Let $A$ and $B$ be unital $C^*$-algebras. The relation $f=\phv^*$ (that is, $f(\om)(a)=\om(\phv(a))$, where $a\in A_{\mathrm{sa}}$ and $\om\in P(A)$ or $\om\in S(A)$, and $\phv$ and $f$ are specified below) gives a bijective correspondence
 between:
\begin{enumerate}
\item Jordan isomorphisms $\phv:A_{\mathrm{sa}}\raw B_{\mathrm{sa}}$;
\item  Bijections
$f:P(B)\raw P(A)$ that preserve transition probabilities and are $w^*$-uniformly continuous along with their inverse;
\item  Affine homeomorphisms $f:S(B)\raw S(A)$. 
\end{enumerate}
\end{theorem}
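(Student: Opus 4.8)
I would prove the theorem by establishing the two equivalences $(1)\Leftrightarrow(3)$ and $(1)\Leftrightarrow(2)$, arranging all constructions so that they realize the relation $f=\phv^*$ and checking (routinely) that they are mutually inverse; only the passage from $(2)$ back to the rest is deep. Throughout I use the fact recorded in \S\ref{secA}(3) (Alfsen--Shultz, 2001, Prop.\ 4.19): a Jordan isomorphism $\phv\colon A_{\mathrm{sa}}\to B_{\mathrm{sa}}$ is exactly a unital linear order isomorphism of the ordered spaces; in particular it is positive, unital, and maps $[0,1_A]$ bijectively onto $[0,1_B]$.

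For $(1)\Rightarrow(3)$: given such a $\phv$, the functional $a\mapsto\om(\phv(a))$ on $A_{\mathrm{sa}}$ is positive and unital for each $\om\in S(B)$, hence is the restriction of a unique state $f(\om):=\phv^*(\om)\in S(A)$; the map $f$ is affine, $w^*$-continuous, and bijective with inverse $(\phv^{-1})^*$, so it is an affine homeomorphism (a continuous bijection between $w^*$-compact Hausdorff spaces). For $(3)\Rightarrow(1)$: the Kadison function representation identifies $A_{\mathrm{sa}}$, as an order-unit space, with the space of $w^*$-continuous real affine functions on $S(A)$ via $a\mapsto\hat a$, $\hat a(\om)=\om(a)$ (isometrically, $\|a\|=\|\hat a\|_{\infty}$), and likewise for $B$; an affine homeomorphism $f\colon S(B)\to S(A)$ induces the unital linear order isomorphism $g\mapsto g\circ f$ between these function spaces, hence a unital linear order isomorphism $\phv\colon A_{\mathrm{sa}}\to B_{\mathrm{sa}}$, which is therefore a Jordan isomorphism. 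One computes $\om(\phv(a))=\hat a(f(\om))=f(\om)(a)$, so $f=\phv^*$, and (using that states separate $B_{\mathrm{sa}}$) this construction inverts the previous one.

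For $(1)\Rightarrow(2)$: restrict $f=\phv^*$ to $P(B)=\partial_eS(B)$; affine homeomorphisms carry extreme points bijectively to extreme points, so $f\colon P(B)\to P(A)$ is a bijection, and since $S(B),S(A)$ are compact both $f$ and $f^{-1}$ are $w^*$-uniformly continuous, hence so are their restrictions to the subspaces $P(B),P(A)$. Preservation of transition probabilities is immediate from \er{tpCstarA0}: the substitution $b=\phv(a)$, a bijection of $[0,1_A]$ onto $[0,1_B]$, turns the infimum defining $\ta_A(f(\om),f(\om'))$ into the one defining $\ta_B(\om,\om')$. Conversely, starting from $f$ in $(2)$ one recovers $\phv$ by the hard step below, and these are mutually inverse because an affine homeomorphism of a compact convex set is determined by its restriction to the extreme boundary (a continuous affine function attains its extrema there).

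The one substantial step is $(2)\Rightarrow(3)$ (which combined with $(3)\Rightarrow(1)$ closes the circle): a bijection $f\colon P(B)\to P(A)$ that preserves transition probabilities and is $w^*$-uniformly continuous with $w^*$-uniformly continuous inverse must extend to an affine homeomorphism $S(B)\to S(A)$. The difficulty is that $P(A)$ is in general neither $w^*$-compact nor $w^*$-dense in $S(A)$, so there is no naive extension ``by affinity and continuity''; reconstructing the convex and $w^*$-topological structure of $S(A)$ out of $P(A)$ is exactly where the $w^*$-\emph{uniform} structure (not merely the topology) of $P(A)$, together with $\ta$, is indispensable. Here I would invoke the reconstruction theory of uniform transition probability spaces (Shultz, 1982; Landsman, 1998, 2017): $(P(A),\ta_A)$ equipped with its $w^*$-uniformity recovers $S(A)$ as a compact convex set---equivalently $A_{\mathrm{sa}}$ as the $JB$-algebra of continuous affine functions on it---functorially in transition-probability-preserving uniform homeomorphisms, so that $f$ induces a Jordan isomorphism $\phv\colon A_{\mathrm{sa}}\to B_{\mathrm{sa}}$ whose dual restricts to $f$ on $P(B)$. (For $A=B(H)$ this specialises to the classical Wigner--Kadison argument, in which \er{epsephvsqrt} ties $\ta$ to the trace norm and \er{UsfU2} reconstructs the Kadison symmetry; but even there the well-definedness of \er{UsfU2} is nontrivial, and for general $A$ there is no such explicit resolution of states into pure states, so one must argue abstractly.) This reconstruction step is where essentially all the work lies; everything else is functorial duality plus the two cited black boxes, namely Kadison's function representation and the identification of unital linear order isomorphisms with Jordan isomorphisms.
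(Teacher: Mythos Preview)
Your proposal is correct and follows the same approach as the paper, which does not prove this theorem but merely cites Shultz (1982) for $1\leftrightarrow2$ and Alfsen--Shultz (2001, Cor.\ 4.20) for $1\leftrightarrow3$, together with the reconstruction theorems of Landsman (1998) and Alfsen--Shultz (2003). You in fact supply more detail than the paper on the elementary directions (via the Kadison function representation and the order-isomorphism characterization of Jordan isomorphisms), while correctly identifying $(2)\Rightarrow(1)$ as the one deep step and deferring it to the same Shultz/Landsman reconstruction machinery the paper invokes.
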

 See Shultz (1982) for $1\lraw 2$ in this theorem and see Alfsen \&\ Shultz (2001),  Corollary 4.20, for $1\lraw 3$. The first of these
 equivalences also follows from the reconstruction of $(A_{\mathrm{sa}},\circ)$ from $P(A)$ in Landsman (1998), whereas
 the second  follows from the reconstruction of $(A_{\mathrm{sa}},\circ)$ from $S(A)$ 
 in Alfsen \&\ Shultz (2003).
The equivalence $3\lraw 4$ on the above list 1--6 is proved in  the same way as for $A=B(H)$.

The case of projections is more complicated, since many $C^*$-algebras have few projections (think of $A=C([0,1])$). Therefore, the poset $\CP(A)$ of all projections in $A$ can only be as informative as the four invariants just discussed under special assumptions on $A$. In the absence of a general result, we single out the class of $AW^*$-algebras 
as a particularly nice one in so far as abundance of projections is concerned. Recall that a $C^*$-algebra $A$ is an $AW^*$-algebra if for each nonempty subset $S\subseteq A$ there is a projection $e\in \CP(A)$ so that $R(S)=eA$, where the \emph{right-annihilator} $R(S)$ of $S\subseteq A$ is defined
as  $R(S)=\{a\in A\mid ba=0\, \forall b\in S\}$, and $R(a)\equiv R(\{a\})$. It follows that if it exists, $e$ is uniquely determined by $S$. For example, all \vna s are $AW^*$-algebras, so this class is vast. See 
 Berberian (1972). 
 
 The key result of interest to our theme is then provided by
Hamhalter's generalization of  Dye's Theorem to  $AW^*$-algebras (Hamhalter, 2015):
\begin{theorem}\L{HamHeuRey}
Let $A$ and $B$ be $AW^*$-algebras and let 
$\mathsf{N}:\CP(A)\raw\CP(B)$ be an isomorphism of the corresponding orthocomplemented projection lattices  that in addition preserves arbitrary suprema. If $A$ has no summand isomorphic to either $\C^2$ or $M_2(\C)$, then there is a unique Jordan isomorphism 
$\mathsf{J}: A_{\mathrm{sa}}\raw B_{\mathrm{sa}}$ that extends $\mathsf{N}$ (and hence Jordan isomorphisms are characterized by their values on projections). 
\end{theorem}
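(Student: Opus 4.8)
The plan is to recover $\mathsf{J}$ in two stages, following the $AW^*$-analogue of Dye's strategy: first extend $\mathsf{N}$ over each singly-generated commutative $AW^*$-subalgebra of $A$ by the Stone-type duality for commutative $AW^*$-algebras, which already determines a map $\mathsf{J}\colon A_{\mathrm{sa}}\raw B_{\mathrm{sa}}$; then prove that this $\mathsf{J}$ is $\R$-linear by a Mackey--Gleason argument, the $AW^*$-version of Gleason's theorem.

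\emph{The local extensions.} For $x\in A_{\mathrm{sa}}$ let $W^*(x)$ be the smallest $AW^*$-subalgebra of $A$ containing $x$ and $1_A$. Because $A$ is an $AW^*$-algebra the spectral projections of $x$ lie in $A$, they mutually commute, and they generate $W^*(x)$; hence $W^*(x)$ is a \emph{commutative} $AW^*$-algebra, so $W^*(x)\cong C(X)$ for a Stonean space $X$, and $\CP(W^*(x))$ is a complete Boolean algebra whose suprema and infima are computed in $\CP(A)$. Since $\mathsf{N}$ is an isomorphism of orthocomplemented lattices that preserves arbitrary suprema, it also preserves arbitrary infima (via $\inf_i e_i=(\sup_i e_i^{\perp})^{\perp}$), so it restricts to an isomorphism of $\CP(W^*(x))$ onto a complete Boolean subalgebra of $\CP(B)$; the latter generates a commutative $AW^*$-subalgebra $D_x\subseteq B$, and the duality between commutative $AW^*$-algebras and complete Boolean algebras promotes $\mathsf{N}\rst\CP(W^*(x))$ to a unique $*$-isomorphism $\mathsf{J}_x\colon W^*(x)\raw D_x$. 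Put $\mathsf{J}(x):=\mathsf{J}_x(x)$; this involves no choice, and by the uniqueness of the local extensions $\mathsf{J}_x$ restricts to $\mathsf{J}_y$ on $W^*(y)$ whenever $y\in W^*(x)_{\mathrm{sa}}$. In particular $\mathsf{J}\rst\CP(A)=\mathsf{N}$; each $\mathsf{J}_x$ is a $*$-isomorphism, so $\mathsf{J}$ is isometric; $\mathsf{J}(1_A)=1_B$; since $x,x^2\in W^*(x)$ we get $\mathsf{J}(x^2)=\mathsf{J}(x)^2$ for all $x$; and because $\mathsf{N}$ preserves $\perp$ and sends an orthogonal sum $e+f=e\vee f$ of projections to $\mathsf{N}(e)\vee\mathsf{N}(f)=\mathsf{N}(e)+\mathsf{N}(f)$, the map $\mathsf{J}$ is additive on pairwise orthogonal projections. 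Finally, the norm-convergent spectral resolution of a self-adjoint element shows that the real linear span of $\CP(A)$ is norm-dense in $A_{\mathrm{sa}}$.

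\emph{Global additivity.} Fix $\psi\in S(B)$. By the last paragraph $\mu_{\psi}\colon\CP(A)\raw[0,1]$, $e\mapsto\psi(\mathsf{N}(e))$, is a bounded finitely additive measure on $\CP(A)$ (it is additive on orthogonal families, and $0\le\mu_{\psi}(e)\le\mu_{\psi}(1_A)=1$). Here one invokes the Bunce--Wright form of the Mackey--Gleason theorem in the generality of $AW^*$-algebras --- and it is exactly here that the hypothesis that $A$ has no summand isomorphic to $\C^2$ or $M_2(\C)$ is needed: under it, $\mu_{\psi}$ is the restriction to $\CP(A)$ of a (necessarily positive, unital) bounded linear functional $\hat{\psi}\in A^*$. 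Writing a finite real linear combination of projections as $x=\sum_i\lm_i e_i$ with the $e_i$ spectral projections of $x$ in $W^*(x)$, we get $\psi(\mathsf{J}(x))=\psi\big(\sum_i\lm_i\mathsf{N}(e_i)\big)=\sum_i\lm_i\hat{\psi}(e_i)=\hat{\psi}(x)$; since $\mathsf{J}$ is isometric and $\hat{\psi}$ is bounded, this extends by density to $\psi(\mathsf{J}(x))=\hat{\psi}(x)$ for all $x\in A_{\mathrm{sa}}$. Thus $x\mapsto\psi(\mathsf{J}(x))$ is $\R$-linear for every state $\psi$ on $B$; as states separate the points of $B_{\mathrm{sa}}$, $\mathsf{J}$ is $\R$-linear.

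\emph{Conclusion.} Being $\R$-linear, unital and square-preserving, $\mathsf{J}$ is a Jordan homomorphism by \er{Jsquares}. Running the whole construction on $\mathsf{N}^{-1}$ yields a Jordan homomorphism $\mathsf{J}'\colon B_{\mathrm{sa}}\raw A_{\mathrm{sa}}$ with $\mathsf{J}'\rst\CP(B)=\mathsf{N}^{-1}$; the composites $\mathsf{J}'\circ\mathsf{J}$ and $\mathsf{J}\circ\mathsf{J}'$ are $\R$-linear isometries equal to the identity on projections, hence on the dense real span of the projections, hence the identity. So $\mathsf{J}$ is a Jordan isomorphism extending $\mathsf{N}$; and it is unique, because any Jordan isomorphism extending $\mathsf{N}$ agrees with $\mathsf{J}$ on $\CP(A)$, hence on its real span, hence everywhere (Jordan isomorphisms of $JB$-algebras are automatically isometric, hence continuous). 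The local extensions and this concluding argument are essentially bookkeeping with $AW^*$-structure theory --- the existence of spectral projections and the commutative $AW^*$/complete-Boolean duality --- so the weight of the theorem sits entirely on the Mackey--Gleason extension used for global additivity, which I expect to be the main obstacle: it genuinely fails for $M_2(\C)$, where the orthocomplemented projection lattice admits all antipode-commuting bijections of the $2$-sphere of rank-one projections as automorphisms, most of them not coming from Jordan maps, and the degenerate abelian algebra $\C^2$ must be excluded along with it; proving this extension theorem at the stated level of generality is the genuinely hard part.
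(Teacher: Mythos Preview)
The paper does not give a detailed proof of this theorem; it explicitly omits it, stating only that the result ``follows from a theorem of Heunen and Reyes (2014) on projections in $AW^*$-algebras and Hamhalter's own Gleason's Theorem for homogeneous $AW^*$-algebras.'' Your two-stage argument---local extension over singly generated commutative $AW^*$-subalgebras via spectral projections and Stone duality, followed by a Gleason-type linearity argument---is exactly this strategy spelled out, so your approach and the paper's indicated route coincide.

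One correction of attribution is worth making, and it is not merely cosmetic. You invoke ``the Bunce--Wright form of the Mackey--Gleason theorem in the generality of $AW^*$-algebras,'' but Bunce and Wright proved their extension theorem for von Neumann algebras, not for general $AW^*$-algebras. The passage to the $AW^*$-setting is precisely Hamhalter's contribution in the 2015 paper that the authors cite for this theorem; it requires genuinely new arguments (in particular for the homogeneous case). Since you correctly identify this Gleason-type step as carrying the entire weight of the proof, you should attribute it to Hamhalter (2015) rather than to Bunce--Wright, and be aware that it is not a routine corollary of the von Neumann case.
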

We omit the proof, as it is not related to our main topic of interest $\CA$; the proof follows from a theorem of Heunen and Reyes (2014) on projections in  $AW^*$-algebras and Hamhalter's own Gleason's Theorem for homogeneous $AW^*$-algebras.

We now move to the posets $\CA$.
Since the  structures 1--4 are equivalent, we may pick the one that is most convenient for a comparison with $\CA$, which turns out to be the Jordan algebra structure of $A$.
Henceforth $A$ and $B$ are unital $C^*$-algebras, and
 we define a \hi{weak Jordan isomorphism}{, also called a \emph{quasi Jordan isomorphism},} of $A$ and $B$ as  an invertible map 
 $\mathsf{J}:A_{\mathrm{sa}}\raw B_{\mathrm{sa}}$ whose restriction to  each subspace $C_{\mathrm{sa}}$ of $A_{\mathrm{sa}}$, where $C\in\CC(A)$, is linear and 
preserves the Jordan product $\circ$ (so that a Jordan symmetry of $A$ alone is a weak Jordan automorphism of of $A$).
Such a map complexifies to a map
$\mathsf{J}_{\C}:A\raw B$ in the same way as for $A=B=B(H)$. If no confusion arises, we  write $\mathsf{J}$ for $\mathsf{J}_{\C}$. 
\begin{proposition}
Given a weak Jordan isomorphism  $\mathsf{J}:A_{\mathrm{sa}}\raw B_{\mathrm{sa}}$, the ensuing map 
$\mathsf{B}: \CC(A)\raw\CC(B)$ defined by $\mathsf{B}(C)=\mathsf{J}_{\C}(C)$ is an order isomorphism.
\end{proposition}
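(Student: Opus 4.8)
The plan is to check three things: that $\mathsf{B}$ takes values in $\CC(B)$; that $\mathsf{B}$ is monotone; and that $\mathsf{B}$ is a bijection with monotone inverse. Monotonicity is immediate, since $\mathsf{J}_{\C}$ is just a map and $C\subseteq D$ forces $\mathsf{J}_{\C}(C)\subseteq\mathsf{J}_{\C}(D)$; and injectivity of $\mathsf{B}$ is immediate too, because $\mathsf{J}_{\C}$ is a bijection of $A$ onto $B$, so $\mathsf{J}_{\C}(C)=\mathsf{J}_{\C}(C')$ forces $C=C'$. Thus the substance is (a) that $\mathsf{B}(C)\in\CC(B)$ for every $C\in\CC(A)$, and (b) that $\mathsf{B}$ is onto and $\mathsf{B}^{-1}$ monotone, which I will obtain by checking that $\mathsf{J}^{-1}\colon B_{\mathrm{sa}}\raw A_{\mathrm{sa}}$ is again a weak Jordan isomorphism.

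For (a), fix $C\in\CC(A)$. Since every member of $\CC(A)$ contains $1_A$, applying $\mathsf{J}$ to $a\circ 1_A=a$ inside $C^*(a,1_A)$ gives $\mathsf{J}(a)=\mathsf{J}(a)\circ\mathsf{J}(1_A)$ for all $a\in A_{\mathrm{sa}}$; since $\mathsf{J}$ is onto, choosing $a$ with $\mathsf{J}(a)=1_B$ forces $\mathsf{J}(1_A)=1_B$. Hence $\mathsf{J}_{\C}\rst C$ is unital, $\C$-linear and $*$-preserving; and because $C$ is commutative its Jordan product equals its ordinary product, so $\mathsf{J}\rst C_{\mathrm{sa}}$ preserves squares and, inductively (since $a^{n+1}=a\circ a^{n}$ in $C$), all powers. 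The crucial point is that $\mathsf{J}(C_{\mathrm{sa}})$ is a commuting set. Let $a,a'\in C_{\mathrm{sa}}$ and put $x=\mathsf{J}(a)$, $y=\mathsf{J}(a')$, $z=xy-yx$. Applying $\mathsf{J}$ to the identity $a^{2}\circ a'=a\circ(a\circ a')$ (both sides equal $a^{2}a'$ in $C$) and expanding in $B$ gives $x^{2}\circ y=x\circ(x\circ y)$, i.e.\ $x^{2}y-2xyx+yx^{2}=0$, i.e.\ $[x,[x,y]]=0$; this holds for every $a\in C_{\mathrm{sa}}$, so replacing $a$ by $a+a''$ (with $a''\in C_{\mathrm{sa}}$) and using linearity of $\mathsf{J}$ on $C_{\mathrm{sa}}$ polarizes it to $[x,[\mathsf{J}(a''),y]]+[\mathsf{J}(a''),[x,y]]=0$, and the choice $a''=a'$ gives $[y,[x,y]]=0$. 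Therefore $z$ commutes with $x$ and with $y$, so $z$ lies in the centre of $C^*(x,y,1_B)$. Now apply $\mathsf{J}$ to $(aa')^{2}=a^{2}a'^{2}$, valid in the commutative $C$, to get $(x\circ y)^{2}=x^{2}\circ y^{2}$; writing $yx=xy-z$ with $z$ central, the two sides expand to $x^{2}y^{2}-2xyz+\tfrac14 z^{2}$ and $x^{2}y^{2}-2xyz+z^{2}$, so $z^{2}=0$, and since $z^{*}=-z$ this means $\|z\|^{2}=\|z^{*}z\|=0$, i.e.\ $z=0$. Hence $\mathsf{J}_{\C}\rst C$ is multiplicative as well, so it is a unital $*$-homomorphism of $C^*$-algebras; being a restriction of the bijection $\mathsf{J}_{\C}$ it is injective, hence isometric with norm-closed range, so $\mathsf{B}(C)=\mathsf{J}_{\C}(C)$ is a unital commutative $C^*$-subalgebra of $B$.

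For (b), note first that the complexification of $\mathsf{J}^{-1}$ is exactly $(\mathsf{J}_{\C})^{-1}=\mathsf{J}_{\C}^{-1}$. So once we know $\mathsf{J}^{-1}$ is a weak Jordan isomorphism, Step (a) applied to it shows that $D\mapsto\mathsf{J}_{\C}^{-1}(D)$ maps $\CC(B)$ into $\CC(A)$; as $\mathsf{J}_{\C}$ is bijective this assignment is a two-sided inverse of $\mathsf{B}$, and it is monotone for the same trivial reason as $\mathsf{B}$, so $\mathsf{B}$ is an order isomorphism. The only non-formal ingredient in ``$\mathsf{J}^{-1}$ is a weak Jordan isomorphism'' is that $\mathsf{J}$ \emph{reflects} commutativity: if $a,a'\in A_{\mathrm{sa}}$ and $[\mathsf{J}(a),\mathsf{J}(a')]=0$, then $[a,a']=0$. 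Granting this, for any $D\in\CC(B)$ the self-adjoint elements of $\mathsf{J}_{\C}^{-1}(D)$ have commuting images, hence commute, so $\mathsf{J}_{\C}^{-1}(D)$ is commutative; it contains $1_A$, is $*$-closed, and is norm-closed because $\mathsf{J}$ is norm-contractive (its restriction to each $C^*(a,1_A)$ is a unital $*$-homomorphism), so $\mathsf{J}_{\C}$ is continuous; thus $\mathsf{J}_{\C}^{-1}(D)\in\CC(A)$, and linearity and Jordan-multiplicativity of $\mathsf{J}^{-1}$ on $\mathsf{J}_{\C}^{-1}(D)_{\mathrm{sa}}$ drop out of the $*$-isomorphisms $\mathsf{J}_{\C}\rst C^*(c,1_A)$ from Step (a).

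The main obstacle is thus the reflection of commutativity. For a pair of projections $e,f\in A$ it is elementary: if $p:=\mathsf{J}(e)$ and $q:=\mathsf{J}(f)$ commute, then $D:=C^*(p,q,1_B)$ is finite-dimensional, so $D=C^*(d,1_B)$ for a single self-adjoint $d$; by Step (a), $\mathsf{J}_{\C}\rst C^*(\mathsf{J}^{-1}(d),1_A)$ is a $*$-isomorphism onto $D$, with inverse $\mathsf{J}^{-1}\rst D$, and applying $\mathsf{J}^{-1}\rst D$ to the mutually orthogonal projections $pq,\ p(1-q),\ (1-p)q,\ (1-p)(1-q)$ (which sum to $1_B$ in $D$) exhibits $e=\mathsf{J}^{-1}(p)$ and $f=\mathsf{J}^{-1}(q)$ as $e=e_{0}+e_{2}$, $f=e_{0}+e_{3}$ with $e_{0},e_{2},e_{3}$ pairwise orthogonal projections in $A$, whence $ef=e_{0}=fe$. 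For general self-adjoint $a,a'$ one reduces to the projection case via the continuous functional calculus: Step (a) gives $\sigma(\mathsf{J}(a))=\sigma(a)$ and $\mathsf{J}(g(a))=g(\mathsf{J}(a))$ for $g\in C(\sigma(a))$, so $\mathsf{J}$ carries the spectral projections of $a$ and $a'$ associated with intervals whose endpoints avoid the relevant spectrum to the corresponding spectral projections of the commuting pair $\mathsf{J}(a),\mathsf{J}(a')$, and the projection case then forces those projections of $a$ and $a'$ to commute. Completing this reduction — in particular accounting for the parts of the spectral resolutions not visible inside $C^*(a,1_A)$ — is the technical point where I would expect to spend most of the effort; it is, in effect, the easy half of the principle that the poset $\CC$ encodes the commutation relations of $A$.
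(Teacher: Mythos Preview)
Your argument for part (a) is correct and considerably more explicit than the paper, which simply declares the result ``elementary and practically the same as for the special case $A=B=B(H)$'' and points to Hamhalter (2011), Proposition 1.1. Your commutator computation---deriving $[x,[x,y]]=0$, polarizing to make $z=[x,y]$ central in $C^*(x,y,1_B)$, and then extracting $z^2=0$ from $(x\circ y)^2=x^2\circ y^2$---is a clean self-contained proof that $\mathsf{J}_{\C}\rst C$ is a $*$-homomorphism. This is precisely the content of the paper's one-line remark (made for $B(H)$) that ``commutativity of $C$ is equivalent to associativity of the Jordan product within $C$, and hence is preserved under Jordan maps''; you have unpacked the non-obvious direction of that equivalence.

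For part (b) you correctly isolate reflection of commutativity as the crux, and your treatment of the projection case is right: the point is that commuting projections sit in a finite-dimensional, hence singly-generated, commutative subalgebra, on which $\mathsf{J}^{-1}$ is a $*$-isomorphism by part (a) applied to $C^*(\mathsf{J}^{-1}(d),1_A)$. That argument in fact generalizes immediately to any commuting pair $b,b'\in B_{\mathrm{sa}}$ contained in some $C^*(d,1_B)$ with $d$ self-adjoint---no detour through projections is needed. But your reduction of the general case to this one via spectral projections does not close, and the gap you flag is real: if $\sigma(a)$ is connected then $C^*(a,1_A)$ contains no non-trivial projections at all, and not every commuting pair lies in a singly-generated commutative subalgebra (take $b,b'$ with joint spectrum homeomorphic to $[0,1]^2$). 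So the spectral argument gives no traction exactly where it is needed.

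The paper does not actually prove this step either; its appeal to the $B(H)$ case is not quite apposite, since there it was dealing with a \emph{full} Jordan automorphism, whose inverse is automatically Jordan and hence preserves commutativity. The intended resolution is to read ``weak Jordan isomorphism'' symmetrically---require $\mathsf{J}^{-1}$ to be linear and Jordan-multiplicative on each $D_{\mathrm{sa}}$ with $D\in\CC(B)$ as well---which is the natural meaning of ``isomorphism'' and is how the weak Jordan maps constructed in the proof of Theorem \ref{Hamhalter1} behave by construction. With that reading, part (b) follows by running your part (a) for $\mathsf{J}^{-1}$, and the Proposition is indeed elementary. Under the purely asymmetric reading you adopted, reflection of commutativity is a substantially harder statement than either you or the paper establishes.
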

Note that as an argument of $\mathsf{B}$ the symbol $C$ is a point in the poset $\CA$, whereas as an argument of $\mathsf{J}_{\C}$ it is a subset of $A$, so that $\mathsf{J}_{\C}$ stands for $\{\mathsf{J}_{\C}(c)\mid c\in C\}$.
The proof is elementary and is practically the same as for the special case $A=B=B(H)$; see also Hamhalter (2011), Proposition 1.1.
\section{Hamhalter's Theorem}\L{secHam}
The converse, however, is a deep result, due to Hamhalter (2011), Theorem 3.4.
\begin{theorem}\L{Hamhalter1}
Let $A$ and $B$ be unital $C^*$-algebras and let $\mathsf{B}:  \CC(A)\raw\CC(B)$ be an order isomorphism. Then there is a weak Jordan isomorphism $\mathsf{J}:A_{\mathrm{sa}}\raw B_{\mathrm{sa}}$ such that { $\mathsf{B}(C)=\mathsf{J}_{\C}[C]$ for each $C\in\CC(A)$}. Moreover, if $A$ is isomorphic to neither $\C^2$ nor $M_2(\C)$, then $\mathsf{J}$ is uniquely determined by $\mathsf{B}$, so in that case there is a bijective correspondence $\mathsf{J} \lraw\mathsf{B}$ between weak Jordan symmetries  $\mathsf{J}$
of $A$ and Bohr symmetries $\mathsf{B}$ of $A$. 
\end{theorem}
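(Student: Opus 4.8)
The plan is to reconstruct the weak Jordan isomorphism $\mathsf{J}$ from the order isomorphism $\mathsf{B}$ by working one commutative subalgebra at a time and then checking that the pieces glue. The first step is to recover the \emph{atoms} of $\CC(A)$: these are the minimal nontrivial elements, and one checks that $C\in\CC(A)$ is an atom precisely when $C\cong\C^2$, i.e.\ $C=\C p+\C(1_A-p)$ for a projection $p\notin\{0,1_A\}$ (when $A=\C$ there are no atoms, and when $A=M_2(\C)$ the atoms are all maximal, which is the degenerate case excluded in the uniqueness clause). Thus $\mathsf{B}$ restricts to a bijection on atoms, hence induces a bijection $p\mapsto p'$ of the nontrivial projections of $A$ onto those of $B$ (well-defined because $\{p,1-p\}$ and $\{q,1-q\}$ generate the same $\C^2$ iff $q\in\{p,1-p\}$; to break the $p\leftrightarrow 1-p$ ambiguity one uses, e.g., the order-theoretic relationship of the atom to the maximal elements above it, but for the Jordan map this ambiguity is irrelevant since only the generated algebra matters). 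Next I would recover, for each maximal commutative $C^*$-subalgebra $C\in\CC(A)$, its image $\mathsf{B}(C)$, which is maximal in $\CC(B)$; and for each such $C$ the down-set $\CC(A)_{\downarrow C}=\{D\in\CC(A)\mid D\subseteq C\}$ is mapped isomorphically onto $\CC(B)_{\downarrow\mathsf{B}(C)}$.

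The core step is the local reconstruction: for a fixed $C\in\CC(A)$, the poset $\CC(C)=\CC(A)_{\downarrow C}$ of unital $C^*$-subalgebras of the commutative algebra $C\cong C(X)$ is, by Gelfand duality, the poset of (unital $C^*$-subalgebras, equivalently compact Hausdorff quotients) of $X$, and the theorem of Hamhalter referenced earlier — or rather the commutative prototype behind it — says that an order isomorphism $\CC(C)\to\CC(D)$ between such posets is induced by a homeomorphism $X\cong Y$, hence by a $*$-isomorphism $C\cong D$, which is of course a Jordan isomorphism $C_{\mathrm{sa}}\to\mathsf{B}(C)_{\mathrm{sa}}$. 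This gives, for every maximal $C$, a Jordan isomorphism $\mathsf{J}_C:C_{\mathrm{sa}}\to\mathsf{B}(C)_{\mathrm{sa}}$ with $\mathsf{J}_C[D]=\mathsf{B}(D)$ for all $D\subseteq C$. The delicate point is \textbf{coherence}: if $C_1,C_2$ are two maximal commutative subalgebras, do $\mathsf{J}_{C_1}$ and $\mathsf{J}_{C_2}$ agree on $(C_1\cap C_2)_{\mathrm{sa}}$? One shows $\mathsf{B}$ restricted to $\CC(C_1\cap C_2)$ is the common restriction, and since $C_1\cap C_2$ need not be maximal one must know that the local Jordan isomorphism attached to a non-maximal $D$ (obtained by restricting $\mathsf{J}_C$ for any maximal $C\supseteq D$) is independent of the choice of $C$. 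This in turn follows from the uniqueness part of the commutative reconstruction: on a commutative $C^*$-algebra $D$, the $*$-isomorphism inducing a given order isomorphism of $\CC(D)$ is unique up to nothing (a homeomorphism of the spectrum is determined by the induced map on closed subsets / quotient spaces), \emph{except} for the two-point spectrum, i.e.\ $D\cong\C^2$, where there is the $p\leftrightarrow 1-p$ swap — but that swap does not change the generated subalgebras and is absorbed when we only demand $\mathsf{J}_{\C}[C]=\mathsf{B}(C)$ rather than pinning $\mathsf{J}$ on individual projections. Hence the $\mathsf{J}_C$ are mutually consistent and assemble into a single map $\mathsf{J}:A_{\mathrm{sa}}\to B_{\mathrm{sa}}$ (every self-adjoint element lies in some, indeed some maximal, commutative $C^*$-subalgebra), which by construction is linear and multiplicative for $\circ$ on each $C_{\mathrm{sa}}$, i.e.\ a weak Jordan isomorphism, with inverse built the same way from $\mathsf{B}^{-1}$; and $\mathsf{B}(C)=\mathsf{J}_{\C}[C]$ holds because it holds locally.

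For the uniqueness clause, suppose $\mathsf{J},\mathsf{J}'$ are two weak Jordan isomorphisms inducing the same $\mathsf{B}$. On each maximal commutative $C$ they give $*$-isomorphisms $C\to\mathsf{B}(C)$ inducing the same order isomorphism of Gelfand spectra at the level of quotients; by the commutative uniqueness statement they coincide on $C$ \emph{unless} $C\cong\C^2$, and more globally one reduces to the behaviour on projections: $\mathsf{J}$ and $\mathsf{J}'$ must send each projection $p$ to the same pair $\{q,1-q\}$, and the only freedom is a choice, for each minimal projection of each maximal MASA, between $q$ and $1-q$. Pinning this down requires comparing overlapping MASAs; the obstruction to rigidity is exactly a global $\Z/2$-type ambiguity, and one shows it collapses unless $A$ has a direct summand $\C^2$ or $M_2(\C)$ — in the first case $\CC(\C^2)$ is a single point with the swap undetectable, in the second $\CC(M_2(\C))$ consists of a point below a continuum of atoms with no further structure to rigidify the choice. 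So if $A$ is isomorphic to neither $\C^2$ nor $M_2(\C)$, $\mathsf{J}$ is forced, giving the claimed bijection $\mathsf{J}\leftrightarrow\mathsf{B}$. \textbf{The main obstacle} I anticipate is the coherence/gluing argument together with the careful bookkeeping of the $p\leftrightarrow 1-p$ ambiguity: making precise how much the commutative reconstruction pins down, and verifying that exactly the two exceptional low-dimensional summands can spoil global uniqueness while never spoiling existence.
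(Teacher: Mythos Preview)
Your overall architecture matches the paper's: reduce to maximal abelian subalgebras, handle each via the commutative case (Gelfand duality), then glue. However, there is a genuine gap in your coherence argument at the two-dimensional intersections.

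You write that when $C_1\cap C_2\cong\C^2$ the $p\leftrightarrow 1-p$ swap ``does not change the generated subalgebras and is absorbed when we only demand $\mathsf{J}_{\C}[C]=\mathsf{B}(C)$'', and conclude that ``the $\mathsf{J}_C$ are mutually consistent''. This conflates two different questions. The condition $\mathsf{J}_{\C}[C]=\mathsf{B}(C)$ is what $\mathsf{J}$ must satisfy \emph{after} it exists; the prior question is whether $\mathsf{J}$ is well defined as a single-valued map. If $D,E$ are maximal with $D\cap E=C^*(e)$ two-dimensional, and the commutative reconstruction happens to give $\mathsf{J}_D(e)=\tilde e$ but $\mathsf{J}_E(e)=1_B-\tilde e$, then both satisfy $\mathsf{J}_D[C^*(e)]=\mathsf{J}_E[C^*(e)]=\mathsf{B}(C^*(e))$, yet they disagree on the element $e$ and no single $\mathsf{J}$ extends both. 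Nothing in your argument rules this out. The paper resolves it by a nontrivial case split on $\dim(eAe)$ and $\dim(e^{\perp}Ae^{\perp})$, using the lemma that for $C$ maximal abelian and $e\in\CP(C)$ one has $\dim(eCe)=1$ iff $\dim(eAe)=1$. This lets one read off, purely from the order structure near the atom $C^*(e)$, which of $e,e^{\perp}$ is ``small'' and hence match it to the correct one of $\tilde e,1_B-\tilde e$; the argument genuinely breaks down exactly when $\dim(eAe)=\dim(e^{\perp}Ae^{\perp})=1$, which forces $A\cong\C^2$ or $A\cong M_2(\C)$ and explains the exceptions. So the exceptional algebras are an obstruction to \emph{existence} of a coherent $\mathsf{J}$ in your construction, not merely to uniqueness as you suggest.

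A secondary point: you invoke ``the commutative prototype'' as a black box, but this is itself the bulk of the proof. One must show that an order isomorphism $\CC(C(X))\to\CC(C(Y))$ comes from a homeomorphism $X\to Y$; the paper does this by identifying $\CC(C(X))$ with the lattice of u.s.c.\ decompositions of $X$, giving a purely order-theoretic characterisation of the sub-poset $\mathcal{F}_2(X)$ of closed subsets with at least two points, and then proving that any order isomorphism $\mathcal{F}_2(X)\to\mathcal{F}_2(Y)$ is induced by a homeomorphism (unique when $|X|\geq 3$). That uniqueness clause is precisely what feeds into the gluing step above, so it cannot be left unproved.
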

The question whether the weak Jordan isomorphism in question is a Jordan isomorphism will be postponed to Theorem \ref{Hamhalter2} below. 

Before proving Theorem \ref{Hamhalter1}, let us explain why $\C^2$ and $M_2(\C)$ are exceptional. 
\begin{itemize}
\item The only  order isomorphism of the
poset $\CC(\C^2)\cong\{0,1\}$ (with $0\equiv \C\cdot 1_2$ and $1\equiv\C^2$)
is the identity map, which is induced by both the map $(a,b)\mapsto (b,a)$
and by the identity map on $\C^2$ (each of which is a weak Jordan automorphism).
\item  The poset 
$\CC(M_2(\C))$ has a bottom element $0\equiv  \C\cdot 1_2$, as before, but no top element; each element 
$C\neq\C\cdot 1_2$
of $\CC(M_2(\C))$ is a unitary conjugate of the diagonal subalgebra $D_2(\C)$, with $0\leq C$ but no other orderings. 
Furthermore, $C\cap \til{C}=\C\cdot 1_2$ whenever $C\neq \til{C}$. Hence any order isomorphism of $\CC(M_2(\C))$ maps $\C\cdot 1_2$ to itself and permutes the $C$'s. Thus each map $\mathsf{J}: M_2(\C)_{\mathrm{sa}}\raw M_2(\C)_{\mathrm{sa}}$ whose complexification $\mathsf{J}_{\C}:M_2(\C)\raw M_2(\C)$ shuffles the $C$'s isomorphically (as $C^*$-algebras)
gives a weak Jordan automorphism. For example, take $(a,b)\mapsto (b,a)$ on $D_2(\C)$ and the identity on each $C\neq D_2(\C)$; this  induces the identity map on $\CC(M_2(\C))$. It follows that there are vastly more weak Jordan automorphisms of $M_2(\C)$ than there are order isomorphisms of  $\CC(M_2(\C))$.
\end{itemize}
The proof of Theorem \ref{Hamhalter1} deserves a section on its own; we roughly follow Hamhalter (2011), but add various details and also take some different turns.
The main differences with the original proof by Hamhalter are the following. Firstly, we give an order-theoretic characterization of u.s.c. decompositions of the form $\pi_K$ (and hence of algebras in $\CC(C(X))$ that are the unitization of some ideal) by three axioms as stated in Lemma 3.1.1 in Firby (1973), whereas Hamhalter uses Proposition 7 in Mendivil (1999), which gives a different characterization of unitizations of ideals.
	Furthermore, Hamhalter only treats Lemma \ref{Hamlemma} in full generality, whereas in our opinion it is very instructive to take the case of finite sets first,
	where many of the key ideas already appear in a setting where they are not overshadowed by topological complications.
	Finally, our proof of Lemma \ref{Hamlem2}.2 differs from Hamhalter's proof. 
\begin{proof}
The key to the proof lies in the commutative case, which can be reduced to topology. If $A=C(X)$,
any $C\in\CA$ induces an equivalence relation $\sim_C$ on $X$ by
\begin{equation}
x\sim_C y \mbox{ iff } f(x)=f(y)\: \forall\, f\in C.\L{simC}
\end{equation}
This, in turn, defines a partition $X=\bigsqcup_{\lm} K_{\lm}$ of $X$ (henceforth called $\pi$), whose blocks  $K_{\lm}\subset X$ are the equivalence classes of $\sim_C$. To study a possible inverse of this procedure, for any closed subset $K\subset X$ we define the ideal 
\beq 
I_K=C(X;K)=\{f\in C(X)\mid f(x)=0\,\forall\, x\in K\},
\eeq
in $C(X)$, 
 and its unitization $\dot{I}_K=I_K\oplus\C\cdot 1_X$, which evidently consists of all continuous functions on $X$ that are constant on $K$. 
 If $X$ is finite (and discrete), 
each partition $\pi$ of $X$  defines some unital  $C^*$-algebra $C\subseteq C(X)$ through
\begin{equation}
C=\bigcap_{K_{\lm}\in\pi}\dot{I}_{K_{\lm}}, \L{Cbigcap}
\end{equation}
which consists of all $f\in C(X)$ that are constant on each block $K_{\lm}$ of the given partition $\pi$. 
In that case, the correspondence $C\lraw\pi$, where $\pi$ is defined by the equivalence relation $\sim_C$ in \er{simC},
gives a bijection between $\CC(C(X))$ and the set $\mathfrak{P}(X)$ of all partitions of $X$. For example, the subalgebra 
$C=\dot{I}_K$ corresponds to the partition consisting of $K$ and all singletons not lying in $K$.
 Given the already defined partial order on $\CC(C(X))$ (i.e., $C\leq D$ iff $C\subseteq D$), we may promote this bijection to an order isomorphism of posets if we define a partial order on $\mathfrak{P}(X)$ to be the \emph{opposite} of the usual one in which $\pi\leq \pi'$ (where $\pi$ and $\pi'$ consist of blocks $\{K_{\lm}\}$ and
 $\{K'_{\lm'}\}$, respectively) iff  each $K_{\lm}$ is contained in some $K'_{\lm'}$ (i.e., $\pi$ is finer than $\pi'$).
 This partial ordering makes  $\mathfrak{P}(X)$ a complete lattice, whose bottom element consists of all singletons on $X$ and whose top element just consists of $X$ itself: the former corresponds to $C(X)$, which is the top element of $\CC(C(X))$, whilst the latter  corresponds to $\C\cdot 1_X$, which is the bottom element of  $\CC(C(X))$. 

For general compact Hausdorff spaces $X$, since $C(X)$ is sensitive to the topology of $X$ the equivalence relation \er{simC} does not induce arbitrary partitions of $X$. It turns out  that each 
$C\in\CC(C(X))$ induces an  \hi{upper semicontinuous partition} (abbreviated by \emph{u.s.c. decomposition}) of $X$, i.e.,
\begin{itemize}
\item Each block $K_{\lm}$ of  the partition $\pi$  is closed;
\item For each block $K_{\lm}$ of $\pi$, if $K_{\lm}\subseteq U$ for some open $U\in\CO(X)$, then there is $V\in\CO(X)$ such that $K_{\lm}\subseteq V \subseteq U$
and $V$ is a union of blocks of $\pi$ (in other words, if $K$ is such a block, then $V\cap K=\emptyset$ implies
{$K=\emptyset$}). 
\end{itemize}
This can be seen as follows. Firstly, if we equip $\pi$ with the quotient topology with respect to the the natural map $q:X\to\pi$, $x\mapsto K_\lambda$ if $x\in K_\lambda$, then $\pi$ is compact, for $X$ is compact. Moreover, $\pi$ is Hausdorff: let $K_\lambda$ and $K_\mu$ be two distinct \emph{points} in $\pi$. Recall that $x,y\in K_\lambda$ if and only if $f(x)=f(y)$ for each $f\in C$. Since $K_\lambda\neq K_\mu$, there is some $x\in K_\lambda$, some $y\in K_\mu$ and some $f\in C$ such that $f(x)\neq f(y)$, whence there are open disjoint $U,V\subseteq\C$ such that $f(x)\in U$ and $f(y)\in V$. 

Define $\hat f:\pi\to\C$ by $\hat f(K_\lambda)=f(x)$ for some $x\in K_\lambda$. 
By definition of $K_\lambda$ this is independent of the choice of $x\in K_\lambda$, hence $\hat f$ is well defined. Again by definition, we have $f=\hat f\circ q$, hence $q^{-1}(\hat f^{-1})[U]=f^{-1}[U]$, which is open in $X$ since $f$ is continuous. Since $\pi$ is equipped with the quotient topology, it follows that $\hat f^{-1}[U]$ is open in $\pi$, and similarly $\hat f^{-1}[V]$ is open. Moreover, we have $\hat f(K_\lambda)=f(x)$ and $f(x)\in U$, hence $K_\lambda\in \hat f^{-1}[U]$, and similarly, $K_\mu\in\hat f^{-1}[V]$. We conclude that $\pi$ is also Hausdorff. Since $q$ is a continuous map between compact Hausdorff spaces, it follows that $q$ is closed.
It now follows from Theorem 9.9 in Willard (1970)---which also gives further background on  decompositions---that $\pi$ is a u.s.c. decomposition.

Consequently, by the same maps \er{simC} and \er{Cbigcap}, the poset $\CC(C(X))$ is anti-isomorphic to  the poset  
$\mathfrak{F}(X)$  of all u.s.c. decompositions of $X$ (this proves that $\mathfrak{F}(X)$ is a complete lattice, since $\CC(C(X))$ is).  This is still a complicated poset; assuming $X$ to be larger than a singleton, the next step is to identify the simpler poset  $\mathcal{F}_2(X)$ of all closed subsets of $X$ containing at least two elements within $\mathfrak{F}(X)$, where (as above) we identify a closed $K\subseteq X$ with the (u.s.c.) partition $\pi_K$ of $X$ whose blocks are $K$ and all singletons not lying in $K$ (note that the poset $\mathcal{F}(X)$ of all closed subsets of $X$ is less useful, since any singleton in $\mathcal{F}(X)$ gives rise to the bottom element of $\mathfrak{F}(X)$).
To do so, we first recall that $\beta$ is said to \emph{cover} $\alpha$ in some poset if $\alpha<\beta$, and $\alpha\leq\gamma<\beta$ implies $\alpha=\gamma$. If the poset has a bottom element, then its covers are called \emph{atoms}. Furthermore,
note that since the bottom element $0$ of $\mathfrak{F}(X)$ consists of singletons, the atoms in $\mathfrak{F}(X)$ are the partitions of the form $\pi_{\{x_1,x_2\}}$  (where $x_1\neq x_2$). It follows that some partition $\pi\in\mathfrak{F}(X)$ lies in $\mathcal{F}_2(X)\subset \mathfrak{F}(X)$ iff exactly
one of the following conditions holds:
\begin{itemize}
\item $\pi$ is an atom in $\mathfrak{F}(X)$, i.e., $\pi=\pi_{\{x_1,x_2\}}$ for some $x_1,x_2\in X$, $x_1\neq x_2$; 
\item  $\pi$ covers  three (distinct) atoms in $\mathfrak{F}(X)$, in which case $\pi=\pi_{\{x_1,x_2,x_3\}}$ where all $x_i$ are different, which covers the atoms $\pi_{\{x_1,x_2\}}$, $\pi_{\{x_1,x_3\}}$, and $\pi_{\{x_2,x_3\}}$;
\item If $\al\neq \beta$ are  atoms  in $\mathfrak{F}(X)$ such that $\al\leq \pi$ and $\beta\leq\pi$, there is an atom $\gm\leq\pi$ such that there are three  (distinct) atoms covered by $\al\vee \gm$ and three  (distinct) atoms covered by $\beta\vee \gm$.
In that case, $\pi=\pi_K$ where $K$ has more than three elements: if $\al=\pi_{\{x_1,x_2\}}$ and $\beta=\pi_{\{x_3,x_4\}}$, then due to the assumption $\al\neq \beta$, the set $\{x_1,x_2,x_3,x_4\}$ (which lies in $K$) has at least three distinct elements,  say $\{x_1,x_2,x_3\}$. 
Hence we may take $\gm=\pi_{\{x_2,x_3\}}$, in which case $\al\vee\gm=\pi_{\{x_1,x_2,x_3\}}$, which covers the atoms $\al$, $\gm$, and $\pi_{\{x_1,x_3\}}$. Likewise, we have $\beta\vee\gm=\pi_{\{x_2,x_3,x_4\}}$, which covers three atoms  $\beta$, $\gm$, and $\pi_{\{x_2,x_4\}}$.
\end{itemize}
In order to see that $\pi$ satisfying the third condition must be of the form $\pi_K$, assume the converse. So $\pi$ contains two blocks $K_\lambda$ and $K_\mu$ consisting of two or more elements. Say $\{x_1,x_2\}\subseteq K_\lambda$ and $\{x_3,x_4\}\subseteq K_\mu$. Then $\alpha=\pi_{\{x_1,x_2\}}$ and $\beta_{\{x_3,x_4\}}$ are atoms such that $\alpha,\beta<\pi$, and there is an atom $\gamma=\pi_{\{x_5,x_6\}}\leq\pi$ such that there are three atoms covered by $\alpha\vee\gamma$, and there are three atoms covered by $\beta\vee\gamma$. It follows from the second condition that $\alpha\vee\gm=\pi_L$ with $L$ a three-point set. This implies that $\{x_1,x_2\}\cap\{x_5,x_6\}$ is not empty, from which it follows that $\alpha\vee\gm=\pi_{\{x_1,x_2,x_5,x_6\}}$. Similarly, we find $\beta\vee\gamma=\pi_{\{x_3,x_4,x_5,x_6\}}$. Since $\{x_1,x_2,x_5,x_6\}$ and $\{x_3,x_4,x_5,x_6\}$ overlap, we obtain $\alpha\vee\beta\vee\gamma=\pi_{\{x_1,x_2,x_3,x_4,x_5,x_6\}}$. Moreover, $\alpha,\beta,\gamma\leq\pi$, so $\alpha\vee\beta\vee\gamma\leq\pi$. However, since $x_1,x_2\in K_\lambda$, we must have $\{x_1,x_2,x_3,x_4,x_5,x_6\}\subseteq K_\lambda$ by definition of the order on $\mathfrak{F}(X)$. But since $x_3,x_4\in K_\mu$, we must also have $\{x_1,x_2,x_3,x_4,x_5,x_6\}\subseteq K_\mu$, which is not possible, since $K_\lambda$ and $K_\mu$ are distinct blocks, hence disjoint. We conclude that $\pi$ can have only one block $K$ of two or more elements, hence $\pi=\pi_K$.     

Thus  $\mathcal{F}_2(X)\subset \mathfrak{F}(X)$ has been characterized  order-theoretically. Moreover, 
\begin{equation}
\pi=\vee_{x\in X} \pi_{K(x)},\L{piveex}
\end{equation}
where $K(x)$ is the unique block of $X$ that contains $x$. Hence $\mathcal{F}_2(X)$ determines $\mathfrak{F}(X)$.

Let $X$ and $Y$ be compact Hausdorff spaces of cardinality at least two (so that the empty set and singletons are excluded). 
By the previous analysis, an order isomorphism $\mathsf{B}:\CC(C(X))\raw\CC(C(Y))$ is equivalent to an 
order isomorphism $\mathfrak{F}(X)\raw \mathfrak{F}(Y)$, which in turn restricts to an order isomorphism $\mathcal{F}_2(X)\raw \mathcal{F}_2(Y)$. 
\begin{lemma}\L{Hamlemma}
If $X$ and $Y$ are compact Hausdorff spaces of cardinality at least two, then any order isomorphism $\mathsf{F}:\mathcal{F}_2(X)\raw \mathcal{F}_2(Y)$ is induced by a homeomorphism $\phv:X\raw Y$ via $\mathsf{F}(F)=\phv(F)$, i.e.,
$\mathsf{F}(F)=\cup_{x\in F}\{\phv(x)\}$.
Moreover,
if  $X$ and $Y$ have  cardinality at least three, then $\phv$ is uniquely determined by $\mathsf{F}$.
\end{lemma}
We first prove this for finite $X$, where  $\mathcal{F}_2(X)$ simply consists of all subsets of $X$ having at least two elements, etc. It is easy to see that $X$ and $Y$ must have the same cardinality $|X|=|Y|=n$. If $n=2$, then  $\mathcal{F}_2(X)=X$ etc., so there is only one map $\mathsf{F}$, which  is induced by each of the two possible maps $\phv:X\raw Y$, so that $\phv$ exists but fails to be unique. 
If $n>2$, then $\mathsf{F}$ must map each subset of $X$ with $n-1$ elements to some subset of $Y$ with $n-1$ elements, so that taking complements we obtain a unique bijection $\phv:X\raw Y$. To show that $\phv$ induces $\mathsf{F}$,  note that the meet  $\wed$ in $\mathcal{F}_2(X)$ is simply intersection $\cap$, and also that for any $F\in \mathcal{F}_2(X)$,
$$F=\cup_{x\in F}\{x\}=\cap_{x\notin F}\{x\}^c=(\cup_{x\notin F}\{x\})^c,$$
 where $A^c=X\backslash A$. Since $\mathsf{F}$ is an order isomorphism it preserves $\wed=\cap$, so that
$$
\mathsf{F}(F)=\cap_{x\notin F}\mathsf{F}(\{x\}^c)=\cap_{x\notin F}X\backslash
 \{\phv(x)\}=(\cup_{x\notin F}\{\phv(x)\})^c=\cup_{x\in F}\{\phv(x)\}.
$$

Now assume that $X$ is infinite. Let $x\in X$. If $x$ is not isolated, we define $ \phv(x)$ as follows. Let $\mathcal{O}(x)$ denote the set of all open neighborhoods of $x$. Since $x$ is not isolated, each $O\in\mathcal{O}(x)$ contains at least another element, so $\overline{O}\in\mathcal{F}_2(X)$. Moreover, finite intersections of elements of $\{\overline{O}:O\in\mathcal{O}(x)\}$ are still in $\mathcal{F}_2(X)$. Indeed, if $O_1,\ldots,O_n\in\mathcal{O}(x)$, then $O_1\cap\ldots\cap O_n$ is an open set containing $x$, and since $\overline{O_1\cap\ldots\cap O_n}\subseteq\overline{O_1}\cap\ldots\cap\overline{O_n}$, it follows that $\overline{O_1}\cap\ldots\cap\overline{O_n}\in\mathcal{F}_2(X)$. Since $\mathsf{F}$ is an order isomorphism, we find that finite intersections of $\{\mathsf{F}(\overline{O}):O\in\mathcal{O}(x)\}$ are contained in $\mathcal{F}_2(Y)$. This implies that $\{\mathsf{F}(\overline{O}):O\in\mathcal{O}(x)\}$ satisfies the finite intersection property. As $Y$ is compact, it follows that $I_x=\bigcap_{O\in\mathcal{O}(x)}\mathsf{F}(\overline{O})$ is non-empty. We can say more: it turns out that $I_x$ contains exactly one element. Indeed, assume that there are two different points $y_1,y_2\in I_x$. Then $\{y_1,y_2\}\in\mathcal{F}_2(Y)$, so $\mathsf{F}^{-1}(\{y_1,y_2\})\in\mathcal{F}_2(X)$. Since $\{y_1,y_2\}\in\mathsf{F}(\overline{O})$ for each $O\in\mathcal{O}(x)$, we also find that $\mathsf{F}^{-1}(\{y_1,y_2\})\subseteq\overline{O}$ for each $O\in\mathcal{O}(x)$. This implies that $$\mathsf{F}^{-1}(\{y_1,y_2\})\subseteq\bigcap_{O\in\mathcal{O}(x)}\overline{O}=\{x\},$$ where the last equality holds by  normality of $X$. But this is a contradiction with $\mathsf{F}:\mathcal{F}_2(X)\to\mathcal{F}_2(Y)$ being a bijection. So $I_x$ contains exactly one point. We define $ \phv(x)$ such that $\{ \phv(x)\}=I_x$. Notice that $ \phv(x)$ cannot be isolated in $Y$, since if we assume otherwise, then $Y\setminus\{ \phv(x)\}$ must be a co-atom in $\mathcal{F}_2(Y)$, whence $\mathsf{F}^{-1}(Y\setminus\{ \phv(x)\})$ is a co-atom in $\mathcal{F}_2(X)$, which must be of the form $X\setminus\{z\}$ for some isolated $z\in X$. Since $x$ is not isolated, we cannot have $x=z$, so $X\setminus\{z\}$ is an open neighborhood of $x$, which is even clopen since $z$ is isolated. By definition of $ \phv(x)$, we must have $\phv(x)\in\mathsf{F}(X\setminus\{z\})$, but $\mathsf{F}(X\setminus\{z\})=Y\setminus\{\phv(x)\}$. We found a contradiction, hence $ \phv(x)$ cannot be isolated.
Now assume that $x$ is an isolated point. Then $X\setminus\{x\}$ is a co-atom in $\mathcal{F}_2(X)$, so $\mathsf{F}(X\setminus\{x\})$ is a co-atom in $\mathcal{F}_2(Y)$, too. Clearly this implies that $\mathsf{F}(X\setminus\{x\})=Y\setminus\{y\}$ for some unique $y\in Y$, which must be isolated, since $Y\setminus\{y\}$ is closed. We define $ \phv(x)=y$.

In an analogous way, $\mathsf{F}^{-1}$ induces a map $\psi:Y\to X$. We shall show that $\phv$ and $\psi$ are each other's inverses. Let $x\in X$ be isolated. We have seen that $\phv(x)$ must be isolated as well, and that $\phv(x)$ is defined by the equation $\mathsf{F}(X\setminus\{x\})=Y\setminus\{ \phv(x)\}$. Since $\mathsf{F}$ is an order isomorphism, we have $X\setminus\{x\}=\mathsf{F}^{-1}(Y\setminus\{ \phv(x)\})$. Since $ \phv(x)$ is isolated, we find by definition of $\psi$ that $\psi( \phv(x))=x$. In a similar way we find that $ \phv(\psi(y))=y$ for each isolated $y\in Y$.
Now assume that $x$ is not isolated and let $F\in\mathcal{F}_2(X)$ such that $x\in F$. Then
\begin{eqnarray*}
	\{ \phv(x)\} & = & \bigcap_{O\in\mathcal{O}(x)}\mathsf{F}(\overline{O})
	 \subseteq  \bigcap\{\mathsf{F}(\overline{O}):O\ \mathrm{open},F\subseteq O\}\\
	& =& \mathsf{F}\left(\bigcap\{\overline{O}:O\ \mathrm{open},F\subseteq O\}\right)
	 =  \mathsf{F}(F),
\end{eqnarray*}
where the last equality follows by completely regularity of $X$. The penultimate equality follows from the following facts. Firstly, the set $\bigcap\{\overline {O}:O\ \mathrm{open},F\subseteq O\}$ is closed since it is the intersection of closed sets. Moreover, the intersection contains more than one point, since $F$ contains two or more points and $F\subseteq\overline{O}$ for each $O$. Hence $\bigcap\{\overline {O}:O\ \mathrm{open},F\subseteq O\}\in\mathcal{F}_2(X)$, and since $\mathsf{F}$ is an order isomorphism, it preserves infima, which justifies the penultimate equality. Hence $ \phv(x)\in\mathsf{F}(F)$ for each $F\in\mathcal{F}_2(X)$ containing $x$. Since $x$ is not isolated, $ \phv(x)$ is not isolated either. Hence in a similar way, we find that $\psi( \phv(x))\in\mathsf{F}^{-1}(G)$ for each $G\in\mathcal{F}_2(Y)$ containing $ \phv(x)$. Let $z=\psi( \phv(x)$. Combining both statements, we find that $z\in F$ for each $F\in\mathcal{F}_2(X)$ such that $x\in F$. In other words, $z\in\bigcap\{F\in\mathcal{F}_2(X):x\in F\}$. Since $x$ is not isolated, we each $O\in\mathcal{O}(x)$ contains at least two points. Hence $$\bigcap\{F\in\mathcal{F}_2(X):x\in F\}\subseteq\bigcap\{\overline{O}:O\in\mathcal{O}(x)\}=\{x\},$$ where we used complete regularity of $X$ in the last equality. We conclude that $z=x$, so $\psi( \phv(x))=x$. In a similar way, we find that $ \phv(\psi(y))=y$ for each non-isolated $y\in Y$. We conclude that $\phv$ is a bijection with $\phv^{-1}=\psi$.

We have to show that if $F\in\mathcal{F}_2(X)$, then $\phv[F]=\mathsf{F}(F)$. Let $x\in F$. When we proved that $\phv$ is a bijection, we already noticed that $\phv(x)\in\mathsf{F}(F)$ if $x$ is not isolated. If $x$ is isolated in $X$, then we first assume that $F$ has at least three points. Since $\{x\}$ is open, $G=F\setminus\{x\}$ is closed. Since $F$ contains at least three points, $G\in\mathcal{F}_2(X)$. So $G$ is covered by $F$ in $\mathcal{F}_2(X)$, so $\mathsf{F}(F)$ covers $\mathsf{F}(G)$. It follows that there must be an element $y_G\in Y\setminus\mathsf{F}(G)$ such that $$\mathsf{F}(F)=\mathsf{F}(G\cup\{x\})=\mathsf{F}(G)\cup\{y_G\}.$$
We have $G\cup\{x\},X\setminus\{x\}\in\mathcal{F}_2(X)$, so
\begin{eqnarray*}
	\mathsf{F}(G) & = & \mathsf{F}(G\cup\{x\}\cap X\setminus\{x\})=\mathsf{F}(G\cup\{x\})\cap\mathsf{F}(X\setminus\{x\})\\
	& = & (\mathsf{F}(G)\cup\{y_G\})\cap(Y\setminus\{ \phv(x)\}),
\end{eqnarray*}
where $\mathsf{F}(X\setminus\{x\})=Y\setminus\{ \phv(x)\}$ by definition of values of $\phv$ at isolated points. Since $x\notin G$ and $\mathsf{F}$ preserves inclusions, this latter equation also implies $\mathsf{F}(G)\subseteq Y\setminus\{ \phv(x)\}$. Hence we find $$\mathsf{F}(G)=(\mathsf{F}(G)\cup\{y_G\})\cap(Y\setminus\{ \phv(x)\})=\mathsf{F}(G)\cup(\{y_G\}\cap Y\setminus\{ \phv(x)\}).$$ Thus we obtain $\{y_G\}\cap Y\setminus\{ \phv(x)\}\subseteq\mathsf{F}(G)$, but since $y_G\notin\mathsf{F}(G)$, we must have $ \phv(x)=y_G$. As a consequence, we obtain $\mathsf{F}(F)=\mathsf{F}(G)\cup\{ \phv(x)\},$ so $ \phv(x)\in\mathsf{F}(F)$.

Summarizing, if $F$ has at least three points, then $ \phv(x)\in\mathsf{F}(F)$ for $x\in F$, regardless whether $x$ is isolated or not. So $\phv[F]\subseteq\mathsf{F}(F)$ for each $F\in\mathcal{F}_2(X)$ such that $F$ has at least three points. Let $F\in\mathcal{F}_2(X)$ have exactly two points. Then there are $F_1,F_2\in\mathcal{F}_2(X)$ with exactly three points such that $F=F_1\cap F_2$. Then since $\phv$ is a bijection and $\mathsf{F}$ as an order isomorphism both preserve intersections in $\mathcal{F}_2(X)$, we find $$\phv[F]=\phv[F_1\cap F_2]= \phv[F_1]\cap \phv[F_2]\subseteq\mathsf{F}(F_1)\cap\mathsf{F}(F_2)=\mathsf{F}(F_1\cap F_2)=\mathsf{F}(F).$$
So $\phv[F]\subseteq\mathsf{F}(F)$ for each $F\in\mathcal{F}_2(X)$. In a similar way, we find $\phv^{-1}[G]\subseteq\mathsf{F}^{-1}[G]$ for each $G\in\mathcal{F}_2(Y)$. So if we substitute $G=\mathsf{F}(F)$, we obtain $\phv^{-1}[\mathsf{F}(F)]\subseteq F$. Since $\phv$ is a bijection, it follows that $\mathsf{F}(F)=\phv[F]$ for each $F\in\mathcal{F}_2(X)$. As a consequence, $\phv$ induces a one-one correspondence between closed subsets of $X$ and closed subsets of $Y$. Hence $\phv$ is a homeomorphism.

This proves Lemma \ref{Hamlemma}. The special case of Theorem \ref{Hamhalter1} where $A$ and $B$ are commutative now follows if we combine all steps so far:
\begin{enumerate}
\item  The Gelfand isomorphism allows us to assume $A=C(X)$ and $B=C(Y)$, as above;
\item The order isomorphism  $\mathsf{B}: \CC(A)\raw\CC(B)$ determines and is determined by an 
 order isomorphism $F:\mathfrak{F}(X)\raw \mathfrak{F}(Y)$ of the underlying lattices of u.s.c. decompositions;
 \item Because of \er{piveex}, the   order isomorphism $F$
  in turn  determines and 
 is determined by an  order isomorphism $\mathsf{F}:\mathcal{F}_2(X)\raw \mathcal{F}_2(Y)$;
 \item Lemma \ref{Hamlemma} yields a homeomorphism $\phv:X\raw Y$  inducing $\mathsf{F}:\mathcal{F}_2(X)\raw \mathcal{F}_2(Y)$;
 \item The inverse pullback $(\phv\inv)^*: C(X)\raw C(Y)$ is an isomorphism of $C^*$-algebras, which 
(running backwards) reproduces the initial map $\mathsf{B}: \CC(C(X))\raw\CC(C(Y))$.
\end{enumerate}
Therefore, in the commutative case we apparently obtain rather more than a weak Jordan isomorphism $\mathsf{J}:A_{\mathrm{sa}}\raw B_{\mathrm{sa}}$; we even found an isomorphism $\mathsf{J}:A\raw B$ of $C^*$-algebras. However,
if $A$ and $B$ are commutative, the condition of linearity on each commutative $C^*$-subalgebra $C$ of $A$ includes $C=A$, so that (after complexification)  weak Jordan isomorphisms are the same as isomorphisms of $C^*$-algebras. 

We now turn to the general case, in which $A$ and $B$ are both noncommutative (the case where one, say $A$, is commutative but the other is not cannot occur, since $\CC(A)$ would be a complete lattice but $\CC(B)$ would not).
Let $D$ and $E$ be maximal abelian $C^*$-subalgebras of $A$, so that the corresponding elements of $\CA$ are maximal in the order-theoretic sense.  Given an order isomorphism  $\mathsf{B}: \CC(A)\raw\CC(B)$, we restrict the map $\mathsf{B}$ to the down-set $\daw\! D=\CC(D)$ in $\CA$ so as to obtain an order homomorphism $\mathsf{B}_{|D}: \CC(D)\raw \CC(B)$. 
The image of $\CC(D)$ under  $\mathsf{B}$ must have a maximal element (since $\mathsf{B}$ is an order isomorphism), and so there is a maximal commutative  $C^*$-subalgebra $\til{D}$ of $B$ such that $\mathsf{B}_{|D}: \CC(D)\raw\CC(\til{D})$ is an order isomorphism. Applying the previous result, we obtain an isomorphism $\mathsf{J}_D: D\raw \til{D}$ of commutative $C^*$-algebras that induces $\mathsf{B}_{|D}$. The same applies to $E$, so we also have an isomorphism $\mathsf{J}_E: E\raw \til{E}$ of commutative $C^*$-algebras that induces $\mathsf{B}_{|E}$. Let $C=D\cap E$, which lies in $\CA$. 
We now show that 
$\mathsf{J}_D$ and $\mathsf{J}_E$ coincide on $C$. There are three cases. \begin{enumerate}
\item $\dim(C)=1$. In that case $C=\C\cdot 1_A$ is the bottom element of $\CA$, so it must be sent to the  bottom element $\til{C}=\C\cdot 1_B$ of $\CC(B)$, whence the claim.
\item $\dim(C)=2$. This the hard case dealt with below.
 \item $\dim(C)>2$. This case is settled by the uniqueness claim in Lemma \ref{Hamlemma}.
\end{enumerate}
So assume $\dim(C)=2$. In that case, $C=C^*(e)$ for some proper projection $e\in\CP(A)$, which is equivalent to $C$ being an atom in $\CA$. Recall that all our $C^*$-algebras are unital, and that by assumption $C^*$-subalgebras share the unit of the ambient $C^*$-algebra, hence $C^*(e)$ contains the unit of $A$. Hence $\til{C}\equiv \mathsf{B}(C)=\mathsf{B}_{|D}(C)=\mathsf{B}_{|E}(C)$  is an atom in $\CC(B)$, which implies that $\til{C}=C^*(\til{e})$ for some projection $\til{e}\in\CP(B)$. If $\mathsf{J}_D(e)=\mathsf{J}_E(e)$ we are done, so we must exclude the case $\mathsf{J}_D(e)=\til{e}$, $\mathsf{J}_E(e)=1_B-\til{e}$. This analysis again requires a case distinction:  
\begin{eqnarray}
 \dim(eAe)&=&\dim(e^{\perp}Ae^{\perp})=1; \L{caseeAe1}\\
 \dim(eAe)&=&1, \:\: \dim(e^{\perp}Ae^{\perp})>1;  \L{caseeAe2}\\
 \dim(eAe)&>&1, \:\: \dim(e^{\perp}Ae^{\perp})>1,\L{caseeAe3}
\end{eqnarray}
where $e^{\perp}=1_A-e$.
Each of these cases is nontrivial, and we need another lemma.
\begin{lemma} \L{Hamlem2}
Let  $C\in\CA$ be maximal (i.e., $C\subset A$ is maximal abelian).
\begin{enumerate}
\item For each projection $e\in\CP(C)$ we have $\dim(eCe)=1$ iff
$\dim(eAe)=1$.
\item We have $\dim(C)=2$ iff either $A\cong\C^2$ or $A\cong M_2(\C)$. 
\end{enumerate}
\end{lemma}
\begin{proof}
For the first claim $\dim(eAe)=1$ clearly implies $\dim(eCe)=1$. For the converse implication, assume  \emph{ad absurdum} that $\dim(eAe)>1$, so that there is an $a\in A$
for which $eae\neq \lm\cdot e$ for any $\lm\in\C$.
 If also  $\dim(eCe)=1$, then any $c\in C$ takes the form $c=\mu\cdot e+ e^{\perp}ce^\perp$ for some $\mu\in\C$. Indeed, since $c,e,e^\perp$ commute within $C$, 
  $$c=ce+ce^\perp=ce^2+c(e^\perp)^2=ece+e^\perp c e^\perp=\mu e+e^\perp c e^\perp,$$ where the last equality follows since $ece\in eCe$, which is spanned by $e$. This implies that $eae\in C'$ (where $C'$ is the commutant of $C$ within $A$), and since $C$ is maximal abelian, we have $C=C'$, whence $eae\in C$. Now $eae=e(eae)e$, hence $eae\in eCe$, whence $eae=\lambda\cdot e$ for some $\lambda\in\C$. 
Contradiction.
 
According to Exercise 4.6.12 in Kadison \& Ringrose (1981), showing that a $C^*$-algebra is finite-dimensional if it has finite-dimensional maximal abelian $\mbox{}^*$-subalgebra,
the assumption $\dim(C)=2$ implies that $A$ is finite-dimensional. The well-known theorem stating that every finite-dimensional $C^*$-algebra is isomorphic to a direct sum of matrix algebras then easily yields the second claim. 
\end{proof}

Having proved Lemma \ref{Hamlem2}, we move on the analyze the  cases
\er{caseeAe1} - \er{caseeAe3}. 
\begin{itemize}
\item Eq.\ \er{caseeAe1} implies that $C$ is maximal, as follows.  Any element $a\in A$ is a sum of
$eae$, $e^{\perp}ae^{\perp}$, $eae^{\perp}$, and $e^{\perp}ae$; nonzero elements of $C'=\{e\}'$ can only be of the first two types. If \er{caseeAe1} holds, then $\dim(C')=2$, but since $C$ is abelian we have $C\subseteq C'$ and since $\dim(C)=2$ we obtain $C'=C$.  Lemma \ref{Hamlem2}.2 then implies that either $A\cong\C^2$ or $A\cong M_2(\C)$. 
These $C^*$-algebras have been analyzed after the statement of Theorem \ref{Hamhalter1}, and since those two $A$'s conversely imply
\er{caseeAe1}, we may exclude them in dealing with \er{caseeAe2} - \er{caseeAe3}. By Lemma  \ref{Hamlem2}.2 (applied to $D$ and $E$ instead of $C$), in what follows we may assume that $\dim(D)>2$ and $\dim(E)>2$ (as $D$ and $E$ are maximal).  
\item Eq.\ \er{caseeAe2} implies $\dim(eD)=1$. Assuming $\mathsf{J}_D(e)=\til{e}$, this implies $\dim(\til{e}\til{D})=1$ (since $\mathsf{J}_D$ is an isomorphism). Applying
Lemma \ref{Hamlem2}.1 to $B$ gives $\dim(\til{e}B\til{e})=1$ (since $\til{D}$ is maximal). If also $\dim((1_B-\til{e})B(1_B-\til{e}))=1$,
then $\dim(\til{D})=2$, whence  $\dim(D)=2$, which we excluded. Hence $\dim((1_B-\til{e})B(1_B-\til{e}))>1$.
Applied to $\mathsf{J}_E$ this gives $\mathsf{J}_E(e)=\til{e}$,  and hence $\mathsf{J}_D$ and $\mathsf{J}_E$ coincide on $C=C^*(e)$.
\item Eq.\  \er{caseeAe3} implies that $\dim(eDe)>1$ as well as  $\dim(e^{\perp}Ee^{\perp})>1$ (apply Lemma \ref{Hamlem2}.1 to $D$ and $E$, respectively). Since $\dim(eDe)>1$, there is some $a\in D$ such that $e$ and $a'=eae\in D$ are linearly independent, and similarly there is some $b\in E$ such that $b'=e^\perp be^\perp$ is linearly independent from $e^\perp$. Then $a',b',e$ 
commute (in fact, $a'b'=b'a'=0$), so that we may form the abelian $C^*$-algebras
$C_1=C^*(e,a')\subseteq D$ and $C_2=C^*(e,b')\subseteq E$, which (also containing the unit $1_A$) both have dimension at least three. We also form  
 $C_3=C^*(e,a',b')$, which contains $C_1$ and $C_2$ and hence is at least three-dimensional, too.
 Because $D$ and $E$ are maximal abelian, $C_3$ must lie in both $D$ and $E$.   Applying the abelian case of the theorem already proved to $D$ and $E$, as before, but replacing $C$ used so far by $C_3$, we find that $\mathsf{J}_D$ and $\mathsf{J}_E$ coincide on $C_3$ (as its dimension is $>2$). In particular, $\mathsf{J}_D(e)=\mathsf{J}_E(e)$. 
\end{itemize}

To finish the proof, we first note that Theorem \ref{Hamhalter1} holds for $A=B=\C$ by inspection, whereas the cases $A\cong B\cong  \C^2$ or $\cong M_2(\C)$ have already been discussed.

 In all other cases we define $\mathsf{J}:A_{\mathrm{sa}}\raw B_{\mathrm{sa}}$ by putting $\mathsf{J}(a)=\mathsf{J}_D(a)$ for any maximal abelian unital $C^*$-subalgebra $D$ containing $C=C^*(a)$ and hence $a$; as we just saw, this is independent of the choice of $D$. Since each $\mathsf{J}_D$ is an isomorphism of commutative $C^*$-algebras, $\mathsf{J}$ is a weak Jordan isomorphism. Finally, uniqueness of $\mathsf{J}$ (under the stated restriction on $A$) 
follows from Lemma \ref{Hamlemma}.\enp
\end{proof}

Theorem \ref{Hamhalter1} begs the question if we can strengthen weak Jordan isomorphisms to Jordan isomorphism. This hinges on the extendibility of weak Jordan isomorphisms to linear maps (which are automatically Jordan isomorphisms). {The problem of whether a quasi-linear map (i.e., a map that is linear on commutative subalgebras) is linear has been studied by Bunce and Wright (1996) for the case of C*-algebras without a quotient isomorphic to $M_2(\mathbb C)$. More suitable for our setting is a generalization of Gleason's Theorem, proven by Bunce and Wright (1992), and thorougly discussed in (Hamhalter, 2004). More generally, one can rely on Dye's Theorem for AW*-algebras. For the exact statement and its proof we refer to Hamhalter (2015), who used it to prove the following result:} 
\begin{theorem}\L{Hamhalter2}
Let $A$ and $B$ be unital $AW^*$-algebras, where $A$ contains neither $\C^2$ nor $M_2(\C)$ as a summand. 
Then there is a bijective correspondence between order isomorphisms $\mathsf{B}: \CC(A)\raw\CC(B)$ and
 Jordan isomorphisms $\mathsf{J}:A_{\mathrm{sa}}\raw B_{\mathrm{sa}}$.
\end{theorem}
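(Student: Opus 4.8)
The plan is to deduce Theorem~\ref{Hamhalter2} from Theorem~\ref{Hamhalter1} together with Theorem~\ref{HamHeuRey}. Since $A$ has no summand isomorphic to $\C^2$ or $M_2(\C)$, it is in particular not isomorphic to $\C^2$ or $M_2(\C)$, so Theorem~\ref{Hamhalter1} already provides a bijective correspondence $\mathsf{B}\lraw\mathsf{J}$ between order isomorphisms $\mathsf{B}:\CC(A)\raw\CC(B)$ and \emph{weak} Jordan isomorphisms $\mathsf{J}:A_{\mathrm{sa}}\raw B_{\mathrm{sa}}$, via $\mathsf{B}(C)=\mathsf{J}_{\C}[C]$. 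Every Jordan isomorphism is a weak Jordan isomorphism, and (by the proposition preceding Theorem~\ref{Hamhalter1}) each weak Jordan isomorphism induces an order isomorphism of $\CC(A)$; so all that remains is to prove the reverse inclusion in the $AW^*$-setting: \emph{every weak Jordan isomorphism $\mathsf{J}:A_{\mathrm{sa}}\raw B_{\mathrm{sa}}$ between unital $AW^*$-algebras, with $A$ having no summand $\cong\C^2$ or $M_2(\C)$, is linear}, hence a genuine Jordan isomorphism. Granting this, the set of weak Jordan isomorphisms $A_{\mathrm{sa}}\raw B_{\mathrm{sa}}$ coincides with the set of Jordan isomorphisms, and Theorem~\ref{Hamhalter1} yields the asserted bijective correspondence verbatim.

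So let $\mathsf{J}$ be such a weak Jordan isomorphism. First I would check that $\mathsf{N}:=\mathsf{J}_{|\CP(A)}$ is an isomorphism of orthocomplemented lattices $\CP(A)\raw\CP(B)$ that preserves arbitrary suprema. For a proper projection $e$, the two-dimensional algebra $C^*(e)=\C\cdot 1_A+\C\cdot e$ lies in $\CC(A)$ and $\mathsf{J}_{\C}$ restricted to it is a unital $*$-homomorphism (linearity together with preservation of the Jordan product, which on $C^*(e)$ is the ordinary product), so $\mathsf{J}(e)\in\CP(B)$ and $\mathsf{N}(1_A-e)=1_B-\mathsf{N}(e)$; applying the same reasoning to $\mathsf{J}^{-1}$ (itself a weak Jordan isomorphism, since $\mathsf{B}$ is an order isomorphism of $\CC(A)$ with $\CC(B)$) shows $\mathsf{N}$ is a bijection. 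For commuting projections $e\le f$, both lie in a single $C\in\CC(A)$ on which $\mathsf{J}$ is an order-preserving $*$-homomorphism, and conversely via $\mathsf{J}^{-1}$, so $\mathsf{N}$ and $\mathsf{N}^{-1}$ are order preserving. Because $A$ and $B$ are $AW^*$-algebras, $\CP(A)$ and $\CP(B)$ are \emph{complete} lattices (Kaplansky), and any order isomorphism of complete lattices automatically preserves all suprema (and infima). Thus $\mathsf{N}$ satisfies the hypotheses of Theorem~\ref{HamHeuRey}, which produces a unique Jordan isomorphism $\til{\mathsf{J}}:A_{\mathrm{sa}}\raw B_{\mathrm{sa}}$ extending $\mathsf{N}$.

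The heart of the proof is then the identity $\til{\mathsf{J}}=\mathsf{J}$, which will establish linearity of $\mathsf{J}$. Here I would pass to maximal abelian subalgebras. Fix $a\in A_{\mathrm{sa}}$ and choose, by Zorn, a maximal abelian $C^*$-subalgebra $D\subseteq A$ with $a\in D$. A maximal abelian subalgebra of an $AW^*$-algebra is again an $AW^*$-algebra (Berberian), hence is of the form $C(Y)$ with $Y$ Stonean, so it has real rank zero: $D_{\mathrm{sa}}=\ovl{\mathrm{span}}\,\CP(D)$ in norm. On the one hand, the restriction $\mathsf{J}_{|D}$ is an isometric $*$-isomorphism of $D$ onto a maximal abelian subalgebra of $B$ (this is exactly the commutative ingredient in the proof of Theorem~\ref{Hamhalter1}), in particular continuous and linear on $D_{\mathrm{sa}}$; on the other hand $\til{\mathsf{J}}_{|D}$ is linear on $D_{\mathrm{sa}}$ and isometric, since Jordan isomorphisms of $JB$-algebras are isometric. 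These two maps agree on $\CP(D)\subseteq\CP(A)$, where both equal $\mathsf{N}$, and they are continuous and linear on $D_{\mathrm{sa}}=\ovl{\mathrm{span}}\,\CP(D)$; hence they coincide on $D_{\mathrm{sa}}\ni a$, giving $\til{\mathsf{J}}(a)=\mathsf{J}(a)$. As $a$ was arbitrary, $\mathsf{J}=\til{\mathsf{J}}$ is a Jordan isomorphism, which is what was needed.

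The step I expect to be the main obstacle is precisely this last identification $\mathsf{J}=\til{\mathsf{J}}$. A weak Jordan isomorphism is a priori badly non-linear — it need only be linear on commutative subalgebras and need not even be globally continuous — so there is no way to argue directly on all of $A_{\mathrm{sa}}$. The point that makes it work is that $\mathsf{J}$ \emph{is} controlled, indeed is an isometric $*$-isomorphism, on each maximal abelian subalgebra $D$, because there it literally coincides with the honest map $\mathsf{J}_D$ supplied by the commutative case of Theorem~\ref{Hamhalter1}; and the passage ``agrees on projections $\Rightarrow$ agrees everywhere'' is legitimate on $D$ only because $D$, being a commutative $AW^*$-algebra, has real rank zero. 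The secondary point requiring care is the verification that $\mathsf{N}$ meets the hypotheses of Theorem~\ref{HamHeuRey}, in particular that it preserves \emph{arbitrary} suprema; this is automatic once one knows $\CP(A)$ and $\CP(B)$ are complete lattices, which for $AW^*$-algebras is classical but should be invoked explicitly.
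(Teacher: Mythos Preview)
Your proposal is correct and follows precisely the route the paper indicates: the paper does not give a self-contained proof of Theorem~\ref{Hamhalter2} but refers to Hamhalter (2015), noting that the passage from weak Jordan isomorphisms to Jordan isomorphisms ``hinges on the extendibility of weak Jordan isomorphisms to linear maps'' and that ``one can rely on Dye's Theorem for $AW^*$-algebras'' (i.e.\ Theorem~\ref{HamHeuRey}). You have simply fleshed out this sketch---restricting $\mathsf{J}$ to $\CP(A)$, invoking Theorem~\ref{HamHeuRey} to produce a genuine Jordan isomorphism $\tilde{\mathsf{J}}$, and then identifying $\tilde{\mathsf{J}}$ with $\mathsf{J}$ on each maximal abelian subalgebra via real rank zero---which is exactly the intended argument.
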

{We note that a version of this theorem in the setting of von Neumann algebras is proven in D\"oring and Harding (2010).}
If $A=B=B(H)$, then the ordinary Gleason Theorem suffices to yield the crucial lemma for 
Wigner's Theorem for Bohr symmetries (i.e.\ Theorem \ref{BigTheorem}.6).
\section{Projections}\L{secPA}
Given some $C^*$-algebra $A$, the orthocomplemented poset $\CP(A)$ of projections in $A$ satisfies the following two conditions:
\begin{enumerate}
	\item if $p\leq q^\perp$, then $p\vee q$ exists (and is equal to $p+q$ in $A$);
	\item if $p\leq q$, then $q=p\vee(p^\perp\wedge q).$
\end{enumerate}
We say that $\CP(A)$ is an \emph{orthomodular} poset. We note that $\CP(A)$ is Boolean if $A$ is commutative, but the converse implication does not hold. Indeed, Blackadar (1981) showed the existence of a non-commutative $C^*$-algebra whose only projections are trivial (and hence form a Boolean algebra). 
Notwithstanding such cases, it might be interesting to investigate in which ways $\CC(A)$ and $\CP(A)$ determine each other. In one direction we have the following result (Lindenhovius, 2016):
\begin{theorem}\label{thm:CAdeterminesProjA}
	Let $A$ and $B$ be $C^*$-algebras. Then any order isomorphism $\CC(A)\to\CC(B)$ induces an isomorphism of orthomodular posets $\CP(A)\to\CP(B)$.
\end{theorem}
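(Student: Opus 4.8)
The plan is to extract projections from the poset $\CC(A)$ in a purely order-theoretic way, so that an order isomorphism $\mathsf{B}\colon\CC(A)\to\CC(B)$ automatically transports them. The key observation is that a nontrivial projection $p\in\CP(A)$ corresponds bijectively to the two-dimensional subalgebra $C^*(p)=\C p\oplus\C p^\perp$, and these two-dimensional subalgebras are exactly the \emph{atoms} of $\CC(A)$ (the covers of the bottom element $\C\cdot 1_A$); this is an order-theoretic condition. The subtlety is that a single atom $C\in\CC(A)$ does not single out $p$ versus $p^\perp$: the unordered pair $\{p,p^\perp\}$ is what is intrinsic. So first I would set $\mathcal{A}t(\CC(A))$ for the set of atoms, note $\mathsf{B}$ restricts to a bijection $\mathcal{A}t(\CC(A))\to\mathcal{A}t(\CC(B))$, and record that each atom determines a complementary pair of projections.

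The second step is to recover the \emph{orthocomplemented poset} structure $(\CP(A),\leq,{}^\perp)$ from the configuration of atoms inside $\CC(A)$. Orthocomplementation is the easy half: it is built into the pair structure of each atom, i.e.\ if an atom $C$ yields $\{p,p^\perp\}$ then ${}^\perp$ just swaps the two elements of the pair, so it is automatically preserved. For the order $p\leq q$ and for the partial join (when $p\leq q^\perp$, so that $p\vee q=p+q$ exists as a projection), I would give an order-theoretic criterion phrased in terms of joins and atoms in $\CC(A)$. The natural candidate: given atoms $C_p=C^*(p)$ and $C_q=C^*(q)$, they "generate" a commutative subalgebra precisely when the projections $p,q$ commute, and in that case $C_p\vee C_q$ in $\CC(A)$ is the (finite-dimensional, atomic) algebra generated by $p$ and $q$; the relation $p\leq q$ can then be detected from how many atoms sit below $C_p\vee C_q$ and from which complementary choices are compatible. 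Concretely, commuting projections $p,q$ with $pq=p$ generate a subalgebra $C^*(p,q)$ of dimension $3$ (blocks $p$, $q-p$, $1-q$), whereas two commuting projections in "general position" ($p\wedge q=0$, $p\vee q=1$) generate a $4$-dimensional subalgebra; counting atoms below the join, together with bookkeeping of which element of each atom's complementary pair is "coherent" with the choices in the other atoms, pins down the full orthomodular relation $q=p\vee(p^\perp\wedge q)$. I would package this as: the structure $\CP(A)$ is definable from $\CC(A)$ using only atoms, the order, and finite joins — all preserved by $\mathsf{B}$ — hence $\mathsf{B}$ induces a bijection $\CP(A)\to\CP(B)$ preserving $\leq$, ${}^\perp$, and the partial sum $p+q$ for orthogonal $p,q$; by the definition of orthomodular poset recalled just before the theorem, this is exactly an isomorphism of orthomodular posets.

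The main obstacle I anticipate is the ambiguity $p\leftrightarrow p^\perp$ in each atom: an order isomorphism of $\CC(A)$ fixes each atom only as a set, so a priori one gets, for every atom, a free binary choice, and one must show these choices can be made \emph{globally coherently} to yield an order isomorphism on $\CP(A)$ rather than merely on the quotient $\CP(A)/{\sim}$ by $p\sim p^\perp$. (This is exactly the phenomenon behind the $M_2(\C)$ and $\C^2$ exceptions in Theorem \ref{Hamhalter1}, where all atoms are pairwise incomparable and the choices are genuinely free — but there $\CP(A)$ itself is small, so the induced map is still a valid orthoposet isomorphism.) I would resolve this by exploiting a third projection: for any two distinct atoms $C_p,C_q$ that are comparable inside some common $C^*(p,q,\dots)$, the order relations among the atoms below their join rigidly correlate the two binary choices, and one uses that any two projections are linked by a chain of such configurations (passing through a maximal abelian subalgebra containing both) to propagate a single global choice. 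Once coherence is established, verifying that $\leq$, ${}^\perp$ and the orthogonal sum are preserved is routine, using that $\mathsf{B}$ preserves all suprema (being an order isomorphism of a complete lattice) and preserves dimension of finite-dimensional elements of $\CC(A)$.
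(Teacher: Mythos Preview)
Your plan is on the right track and close in spirit to the paper's, but the decisive step---making the binary choices $p$ vs.\ $p^\perp$ globally coherent---is not adequately justified, and there is a factual slip along the way.

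The slip: $\CC(A)$ is \emph{not} a complete lattice when $A$ is noncommutative (indeed, this is how the paper rules out the mixed commutative/noncommutative case in the proof of Theorem~\ref{Hamhalter1}); atoms $C^*(p)$ and $C^*(q)$ have no join when $p,q$ fail to commute. What you actually need---that $\mathsf{B}$ preserves the joins that \emph{do} exist, in particular those of atoms coming from pairwise commuting projections---holds for any order isomorphism, so this is easily repaired.

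The genuine gap is the coherence argument. Your proposal to propagate the choice along ``chains\ldots passing through a maximal abelian subalgebra containing both'' projections fails outright for non-commuting $p,q$ (no such MASA exists), so arbitrary atoms are not directly linked. Even among commuting projections, you have not shown that the local correlations forced inside each finite $C^*(p,q,\ldots)$ are \emph{globally consistent}: a priori, going around a cycle of overlapping commutative subalgebras could return the opposite choice. Ruling this out is precisely the content of the Harding--Navara theorem (an order isomorphism of the poset $\CB(P)$ of Boolean subalgebras of an orthomodular poset $P$ is induced by an orthomodular isomorphism of $P$), and it requires real work---note in particular that your dimension count does not by itself break the symmetry, since both $p\leq q$ and $p\perp q$ (with $p+q\neq 1$) yield a three-dimensional $C^*(p,q)$.

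The paper sidesteps these difficulties by reducing to Harding--Navara as a black box: it first shows (via purely order-theoretic criteria---compact elements, or generation by atoms) that $\mathsf{B}$ restricts to an order isomorphism $\CC_\AF(A)\to\CC_\AF(B)$ of the approximately finite-dimensional parts, then invokes the equivalence between commutative AF-algebras and Boolean algebras to identify $\CC_\AF(A)\cong\CB(\CP(A))$, and finally applies Harding--Navara. Your outline amounts to an attempt to reprove Harding--Navara in this particular setting; that can be done, but not with the sketch you give.
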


Its proof is based on the following observations. Firstly, a $C^*$-algebra $A$ is called \emph{approximately finite-dimensional}, abbreviated by \emph{AF}, if there is a directed collection $\DD$ of finite-dimensional $C^*$-subalgebras of $A$ such that $A=\overline{\bigcup\DD}$. A commutative $C^*$-algebra $A$ is AF if and only if it is generated by its projections, which form a Boolean algebra, since $A$ is commutative. One can show that the Gelfand spectrum of $A$ is homeomorphic to the Stone spectrum of $\CP(A)$, and that there is an equivalence between the category of Boolean algebras with Boolean morphisms and the category of commutative AF-algebras with *-homomorphisms.

Given a $C^*$-algebra $A$, we let $\CC_\AF(A)$ be the subposet of $\CC(A)$ consisting of all commutative $C^*$-subalgebras of $A$ that are AF. There are several order-theoretic criteria whether or not an element $C\in\CC(A)$ belongs to $\CC_\AF(A)$, which is important since any order-theoretic criterion assures that an order isomorphism $\CC(A)\to\CC(B)$ restricts to an order isomorphism $\CC_\AF(A)\to\CC_\AF(B)$. Firstly, $C\in\CC_\AF(A)$ if and only if $C$ is the supremum of a directed subset of the \emph{compact} elements of $\CC(A)$, where $B\in\CC(A)$ is called compact if for each directed subset $\DD$ of $\CC(A)$ the inclusion $B\subseteq\bigvee\DD$ implies that $B\subseteq D$ for some $D\in\DD$;  note that the supremum $\bigvee\DD$ of any directed subset $\DD$ exists, and is given by $\overline{\bigcup\DD}$.
The compact elements of $\CC(A)$ turn out to be the finite-dimensional elements of $\CC(A)$, so clearly $C$ is AF if and only if it is the directed supremum of compact elements. Another criterion for $C\in\CC_\AF(A)$ is that $C$ is the supremum of some collection of atoms in $\CC(A)$. The intuition behind this criterion is that $C\in\CC_\AF(A)$ if and only if it is generated by its projections, and any atom of $\CC(A)$ is a $C^*$-subalgebra of $A$ that is generated by single proper projection in $A$. For details, we refer to Heunen \& Lindenhovius (2015).

Secondly, given some orthomodular poset $P$, we say that some subset $D\subseteq P$ is a \emph{Boolean subalgebra} of $P$ if it is a Boolean algebra in its relative order for which the meet, join, and orthocomplementation agree with the meet, join, and orthocomplementation, respectively on $P$. We denote the poset of Boolean subalgebras of $P$ ordered by inclusion by $\CB(P)$. The equivalence between the categories of Boolean algebas and of commutative AF-algebras yields the following proposition:
\begin{proposition}\label{prop:equivCAFandBProj}\
	Let $A$ be a $C^*$-algebra. The map 
\begin{eqnarray}
\CC_\AF(A)&\to&\CB(\CP(A));\\
C&\mapsto &\CP(C),
\end{eqnarray}
  is an order isomorphism with inverse $B\mapsto C^*(B),$ where $C^*(B)$ denotes the $C^*$-subalgebra of $A$ generated by $B$.
\end{proposition}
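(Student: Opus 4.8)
The plan is to verify that the two maps $C\mapsto\CP(C)$ and $B\mapsto C^*(B)$ are well defined, order-preserving in both directions, and mutually inverse; the real content is packaged into the equivalence of categories between Boolean algebras and commutative AF-algebras mentioned just above, so most of the work is bookkeeping once that equivalence is in hand. First I would check that the first map lands in $\CB(\CP(A))$: if $C\in\CC_\AF(A)$, then $C$ is a commutative AF-algebra, hence (by the stated characterization) generated by its projections, and $\CP(C)$ is a Boolean algebra; moreover the lattice operations and orthocomplementation of $\CP(C)$ are computed inside the commutative algebra $C$, and since $C\subseteq A$ is a $C^*$-subalgebra sharing the unit, these agree with the operations on $\CP(A)$ restricted to $\CP(C)$ (for orthogonal projections $p,q\in C$ one has $p\vee q=p+q$ in $C$ and in $A$ alike, and $p^\perp=1_A-p$ in both). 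Hence $\CP(C)$ is genuinely a Boolean subalgebra of $\CP(A)$ in the sense defined above.

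Next I would check that the second map is well defined: given a Boolean subalgebra $B\subseteq\CP(A)$, the $C^*$-subalgebra $C^*(B)\subseteq A$ generated by $B$ is commutative, since $B$ consists of pairwise commuting projections (any two elements of a Boolean algebra have a meet, and in $\CP(A)$ commuting projections is exactly what makes the Boolean operations behave), and it is AF because it is a commutative $C^*$-algebra generated by projections, hence by the criterion recalled above it lies in $\CC_\AF(A)$. Both maps are evidently monotone: $C\subseteq C'$ implies $\CP(C)\subseteq\CP(C')$, and $B\subseteq B'$ implies $C^*(B)\subseteq C^*(B')$.

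It remains to see that the two maps are mutually inverse. For $B\in\CB(\CP(A))$ we need $\CP(C^*(B))=B$: the inclusion $B\subseteq\CP(C^*(B))$ is immediate, and for the reverse inclusion one invokes the Stone-type duality, namely that the Gelfand spectrum of $C^*(B)$ is the Stone space of the Boolean algebra $B$, so that the projections of $C^*(B)$ correspond exactly to the clopen subsets of that Stone space, i.e. to the elements of $B$ itself; no new projections are created by taking the $C^*$-closure. For $C\in\CC_\AF(A)$ we need $C^*(\CP(C))=C$, which is precisely the statement that a commutative AF-algebra is generated by its projections, already recorded above. Both composites being the identity, the maps are order isomorphisms inverse to one another.

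The main obstacle is the identity $\CP(C^*(B))=B$: one must be sure that passing to the generated $C^*$-algebra and then back to projections does not enlarge $B$. This is exactly where the equivalence of categories between Boolean algebras and commutative AF-algebras does the work — under that equivalence $B$ corresponds to a totally disconnected compact Hausdorff space $\Omega$ with $C^*(B)\cong C(\Omega)$ and $B\cong\mathcal{C}\ell(\Omega)$ (the clopen algebra), and the projections of $C(\Omega)$ are precisely the indicator functions of clopen sets, so $\CP(C^*(B))\cong\mathcal{C}\ell(\Omega)\cong B$ naturally. Everything else is a routine compatibility check of orders and of the Boolean versus $C^*$-algebraic operations.
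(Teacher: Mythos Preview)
Your proposal is correct and follows the same line the paper indicates: the paper does not spell out a proof but simply says that the proposition is yielded by the equivalence between the category of Boolean algebras and the category of commutative AF-algebras, and your argument is precisely the natural unpacking of that equivalence. The one place where you do a bit more than strictly necessary is in checking that elements of a Boolean subalgebra $B\subseteq\CP(A)$ pairwise commute as operators (so that $C^*(B)$ is commutative); this is true, but deserves the one-line justification that in any orthomodular poset two elements lying in a common Boolean subalgebra are compatible, which for projections in a $C^*$-algebra amounts to $pq=qp$.
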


A modification of the Harding--Navara Theorem (cf. Remark 4.4 in Harding \& Navara, 2011)
states that if $P$ and $Q$ are orthomodular posets, then an order isomorphism $\Phi:\CB(P)\to\CB(Q)$ is induced by an orthomodular isomorphism $\varphi:P\to Q$ via $B\mapsto\varphi[B]$. Moreover, this orthomodular isomorphism is unique if $P$ does not have \emph{blocks}, i.e., maximal Boolean subalgebras of four elements. In combination with the previous proposition, this proves the following  (Lindenhovius, 2016):
\begin{theorem}\label{thm:orderisoCAandorthoisoBProjA}
	Let $A$ and $B$ be $C^*$-algebras. Then for each order isomorphism $$\Phi:\CC_\AF(A)\to\CC_\AF(B)$$ there is an orthomodular isomorphism $\varphi:\CP(A)\to\CP(B)$ such that $$\Phi(C)=C^*(\varphi[\CP(C)]),$$ for each $C\in\CC_\AF(A)$, which is unique if $\CP(A)$ does not have blocks of four elements.
\end{theorem}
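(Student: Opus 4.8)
The plan is to obtain the theorem as a direct transport of $\Phi$ along the order isomorphism of Proposition~\ref{prop:equivCAFandBProj}, followed by an application of the modification of the Harding--Navara theorem recalled above. Write $\alpha_A:\CC_\AF(A)\to\CB(\CP(A))$ for the order isomorphism $C\mapsto\CP(C)$ of Proposition~\ref{prop:equivCAFandBProj}, with inverse $D\mapsto C^*(D)$, and likewise $\alpha_B:\CC_\AF(B)\to\CB(\CP(B))$. First I would form the composite
\[
\Psi:=\alpha_B\circ\Phi\circ\alpha_A^{-1}:\CB(\CP(A))\to\CB(\CP(B)),
\]
which is an order isomorphism of the posets of Boolean subalgebras, being a composite of three order isomorphisms.

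Next I would apply the modification of the Harding--Navara theorem to $\Psi$: since $\CP(A)$ and $\CP(B)$ are orthomodular posets, there is an orthomodular isomorphism $\varphi:\CP(A)\to\CP(B)$ with $\Psi(D)=\varphi[D]$ for every $D\in\CB(\CP(A))$, and this $\varphi$ is unique whenever $\CP(A)$ has no blocks of four elements. It then remains to translate back through $\alpha_A$ and $\alpha_B$: for $C\in\CC_\AF(A)$ we have $\alpha_A(C)=\CP(C)$, so that
\[
\Phi(C)=\alpha_B^{-1}\big(\Psi(\CP(C))\big)=\alpha_B^{-1}\big(\varphi[\CP(C)]\big)=C^*\big(\varphi[\CP(C)]\big),
\]
using that $\alpha_B^{-1}$ is the map $D\mapsto C^*(D)$; here one checks that $\varphi[\CP(C)]$ is indeed a Boolean subalgebra of $\CP(B)$, being the image of a Boolean subalgebra under an orthomodular isomorphism, so that the right-hand side makes sense and lies in $\CC_\AF(B)$. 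For the uniqueness clause: if $\varphi'$ is another orthomodular isomorphism with $\Phi(C)=C^*(\varphi'[\CP(C)])$ for all $C\in\CC_\AF(A)$, then applying $\alpha_B$ gives $\Psi(\alpha_A(C))=\varphi'[\alpha_A(C)]$, and since $\alpha_A$ is onto $\CB(\CP(A))$ this says $\varphi'$ also induces $\Psi$; when $\CP(A)$ has no blocks of four elements the Harding--Navara uniqueness then forces $\varphi'=\varphi$.

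Essentially all of the genuine content lies in the two ingredients we are allowed to assume, namely Proposition~\ref{prop:equivCAFandBProj} (resting on the equivalence between Boolean algebras and commutative AF-algebras together with the order-theoretic characterisation of $\CC_\AF(A)$ inside $\CC(A)$) and the orthomodular-poset version of the Harding--Navara theorem. The argument itself is bookkeeping, and the only point requiring care is that the hypothesis ``$\CP(A)$ has no blocks of four elements'' is exactly the condition that transports from the Harding--Navara statement applied to the source orthomodular poset, rather than a condition on $\CP(B)$ or on $\CC_\AF(A)$; I do not anticipate any obstacle beyond making sure the three composed maps go between the correct posets and that ``block'' is read uniformly as ``maximal Boolean subalgebra with exactly four elements'' on both sides.
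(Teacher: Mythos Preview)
Your proposal is correct and matches the paper's own argument essentially verbatim: the paper states that the theorem follows by combining Proposition~\ref{prop:equivCAFandBProj} with the modification of the Harding--Navara theorem, which is precisely the transport-along-$\alpha_A$, apply Harding--Navara, transport-back-along-$\alpha_B^{-1}$ argument you spell out. Your version is more explicit about the uniqueness clause and the bookkeeping, but there is no difference in approach.
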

Theorem \ref{thm:CAdeterminesProjA} now is an easy consequence of Theorem \ref{thm:orderisoCAandorthoisoBProjA}.

\section*{Acknowledgement}
The first author has been supported by  Radboud University and Trinity College (Cambridge).
The second author was supported by the Netherlands Organisation for Scientific
Research (NWO) under TOP-GO grant no. 613.001.013.

\end{document}